\newtheorem{condition}{Condition}
\newtheorem{theorem}{Theorem}
\newtheorem{lemma}{Lemma}
\newtheorem{proposition}{Proposition}
\title{Distributed Conditional Feature Screening via Pearson Partial Correlation with FDR Control}
\author{Naiwen Pang, Xiaochao Xia\footnote{Corresponding author, Email:xxc@cqu.edu.cn} \\
College of Mathematics and Statistics, Chongqing University, 401331}
\date{ }
\begin{document}

\maketitle

\begin{abstract}
This paper studies the distributed conditional feature screening for massive data with ultrahigh-dimensional features. Specifically, three distributed partial correlation feature screening methods (SAPS, ACPS and JDPS methods) are firstly proposed based on Pearson partial correlation. The corresponding consistency of distributed estimation and the sure screening property of feature screening methods are established. Secondly, because using a hard threshold in feature screening will lead to a high false discovery rate (FDR), this paper develops a two-step distributed feature screening method based on knockoff technique to control the FDR. It is shown that the proposed method can control the FDR in the finite sample, and also enjoys the sure screening property under some conditions. Different from the existing screening methods, this paper not only considers the influence of a conditional variable on both the response variable and feature variables in variable screening, but also studies the FDR control issue. Finally, the effectiveness of the proposed methods is confirmed by numerical simulations and a real data analysis.

\noindent{\textbf{Keywords:} Partial correlation; Distributed feature screening; Knockoff technique; FDR control }    
\end{abstract}

\section{Introduction}
With the rapid development of information technology, it has become easier for researchers to collect, store and manipulate the large-scale/massive data sets with ultrahigh-dimensional features in various fields, such as finance, biology, engineering, social science, medicine and so forth. How to deal with these large-scale data sets quickly and effectively has gradually become one of the important frontier problems in the statistical community.

For ultrahigh-dimensional data, which means the number of covariates(features) $p$ is much larger than sample size, it is usually assumed that only a few features are relevant to the response variable. To handle such data, researchers usually adopt the strategy of filtering out irrelevant features before modeling and forecasting. However, due to the challenges of computational expediency, statistical accuracy, and algorithmic stability (\citep{r12}), the regularization-based variable selection methods are generally difficult to apply to ultrahigh-dimensional problems directly. In order to solve the problems caused by ultrahigh-dimensional data, \citet{r5} proposed a ultrahigh-dimensional feature screening method under Gaussian linear model, namely, the sure independence screening (SIS). Their method uses Pearson correlation as a utility function to rank the importance of features. They further proved that the SIS method enjoys the sure screening property that the set of selected features contains all important features with asymptotic probability one. This property has become a basis theory for evaluating the performance of feature screening methods. To relax the restriction of the SIS method on linear model, \citet{r6} proposed a model-free sure independence ranking screening (SIRS) method. This method constructs a utility function using the conditional distribution of the response variable. \citet{r13} proposed a feature screening method based on distance correlation (DC-SIS). \citet{r7} developed a robust feature screening method named the projection correlation-based screening (PC-SIS) via projection correlation introduced in \citet{r14}. \citet{r15} further studied the feature screening method based on a stable correlation. 
 
In most of the existing studies on ultrahigh-dimensional feature screening, the sample size $N$ is not very large, which is usually of hundreds in DNA sequencing. However, in some real problems on massive data, the data size is very likely to be extremely large or huge. This will bring many challenges in storage and computation. A common solution is to adopt the "divide-and-conquer" strategy. That is, the original problem with large-scale data is firstly divided into multiple small- or moderate-scale data sub-problems, each of which can be solved separately on a single computer, and then the results of the sub-problems are combined to form the final result for the original problem. Naturally, parallel computing can be employed to rapidly solve the sub-problems, thus the computing efficiency is improved. On the other hand, in multi-institution medical research, data may not be shared between institutions partly due to privacy policies. This also encourages researchers to study the computation  method of distributed data or the fusion technology of multi-source data. For example, \citet{r2} first applied the distributed method based on the "divide and conquer" strategy to the ultrahigh-dimensional feature screening of massive data. They proposed the aggregated feature screening method that can be applied to a wide range of correlation measures, and proved that their distributed feature screening method still enjoys sure screening property. \citet{r1} utilized the subsampling method to screen the features of massive ultrahigh-dimensional data, and also established the sure screening property for their method. \citet{r25} proposed a distributed feature screening method for the categorical response, which measures the importance of features to the categorical response using a conditional rank utility (CRU) function.

The main purpose of feature screening is to select a subset $\hat{\mathcal{M}}=\left\{j:\hat{\omega}_j \geq \gamma,j=1,\ldots, p \right\}$ to contain all important features, where $\hat{\omega}_j$ is an estimator of some commonly used correlation coefficient between the feature $X_j$ and the response $Y$. The key is to select an appropriate threshold, $\gamma$. Theoretically, as long as $\gamma$ satisfies a certain condition, feature screening method enjoys the sure screening property. But, it is not easy to determine an appropriate threshold in practice. In order to ensure sure screening property, a practical choice is to firstly rank the sample correlations and then select the top $d$ features as the set of important features, where $d$ is slightly smaller than the sample size. \citet{r5} suggest to use $d=\lfloor n/\log(n) \rfloor$ in reality, where $n$ denotes the sample size and $\lfloor \cdot \rfloor$ denotes the largest integer not exceeding itself. \citet{r6} determined the threshold by introducing auxiliary variables. However, all of these methods can lead to a high false discovery rate (FDR), indicating that too many noisy features are retained in the selected model. Thus, a desired feature screening method can enjoy the sure screening property and control FDR under a low level simultaneously. Regarding the FDR control, \citet{r8} recently put forward the concept of knockoff variables when considering the variable selection problem of Gaussian linear model. They generated knockoff variables for a fixed design matrix, and proposed a variable selection method that can control FDR. \citet{r9} extended the knockoff method to random design matrix. \citet{r16} further extended the Knockoff method to high-dimensional linear models. More recently, \citet{r7} has extended the knockoff variable method to ultrahigh-dimensional feature screening. Whereas, these methods can not be applicable to massive data. 

On the other hand, as \citet{r5} pointed out that when the dimension $p$ increases, there will be spurious correlations between covariables, so there may be some potential problems with feature screening methods based on marginal correlation. For instance, it may result in some important features that are marginally uncorrelated but jointly correlated with the response not being selected, while some unimportant features are selected because of their strong correlation with important features. And \citet{r17} also pointed out that in many real-world applications, it is necessary for researchers to adjust the correlation for the influence of some variables. For example, in the study of the association between two biomarkers, researchers often want to eliminate the influence of confounding factors such as age, weight, gender and so on. Based on the above reasons, some researchers consider to use partial correlation or conditional correlation as utility function in feature screening. \citet{r18} proposed the PC-simple algorithm, which simplifies the PC algorithm of \citet{r22}. They used partial correlation to measure the relationship between response and variables, and then carried out variable selection. \citet{r19} studied the limit distribution of sample partial correlation under the ellipsoidal distribution hypothesis, and proposed a new feature screening and variable selection method under the ultrahigh-dimensional linear model. \citet{r20} studied the quantile partial correlation-based feature screening in ultrahigh-dimensional quantile linear regression model.  \citet{r24} proposed a robust feature screening method based on conditional quantile correlation. \citet{r23} proposed a partial correlation based on copula function, and proposed a robust feature screening method (CPC-SIS). \citet{r11} proposed a conditional feature screening method based on conditional independence measure, and proved that the method enjoys the sure screening property and rank consistency property under certain conditions. 

In this paper, we propose a distributed conditional feature screening procedure with FDR control, which is an extension of the work of \citet{r2}.  Specifically, we introduce Pearson partial correlation into the distributed feature screening of massive data with ultrahigh-dimension, and prove that the proposed distributed feature screening method enjoys the sure screening property and ranking consistency property under certain conditions. Meanwhile, compared with \citet{r7} for feature screening based on marginal correlation, the method developed in this paper allows the influence of a conditional variable on the response variable. Moreover, three distributed feature screening methods are presented, and the theoretical properties of the methods are established. In addition, a two-step distributed feature screening method based on knockoff features is proposed, which achieves automatic threshold selection by controlling FDR. The theoretical properties of the method are also provided.

The rest of this paper is organized as follows. In Section 2, the methodology is  introduced. In Section 3, some related theoretical properties are presented. In Section 4, we provide a distributed feature screening method based on FDR control and their theoretical properties. Numerical simulations are given in Section 5, and a real-world data set is illustrated in Section 6. Concluding remarks are given in Section 7. Proofs of theorems are given in the Appendix.

\section{Distributed Feature Screening Based on Partial Correlation}

\subsection{Model and Notations}
Let $\mathcal{D}={\left\{(Y_i,\mathbf{X}_i,Z_i)\right\}}^{N}_{i=1}$ be $N$ independent and identically distributed as the population $(Y,\mathbf{X},Z)$, where $Y$ is a response variable, $\mathbf{X}={(X_1,\ldots,X_p)}^{T}$ is a $p$-dimensional feature vector, and $Z$ is the conditional variable. To measure the correlation between $\mathbf{X}$ and $Y$, we want to adjust for the variable $Z$. In high-dimensional data analysis, sparsity assumption is often made, which means only a small set of feature variables are important. We use $\mathcal{M}$ to represent the set of indices of all important features, and its complement is denoted as $\mathcal{M}^{c}=\left\{1,\ldots,p\right\} \setminus \mathcal{M}$. 

The goal of feature screening is to estimate $\mathcal{M}$. Let $\omega_j \geq 0 $ be a measure of correlation strength between $Y$ and $X_j$, then we define $\mathcal{M}=\left\{j:\omega_j > 0,j=1,\ldots,p \right\}$. Let $\hat{\omega}_j$ be an estimator of $\omega_j$ based on the data $\mathcal{D}$. We define
$$ \hat{\mathcal{M}}=\left\{  j:\hat{\omega}_j \geq \gamma,j=1,\ldots,p \right\}, $$
with a pre-specified threshold $\gamma >0$. The Pearson correlation is the most commonly used correlation to measure the linear relationship between two random variables. When the conditional feature $Z$ has influence on both $Y$ and $X_j$, it may not be accurate to describe the association between $X_j$ and $Y$ using only Pearson correlation. Therefore, it is more suitable to use a partial correlation, $\rho_{Y,X_j|Z}$, to measure the association between the response $Y$ and the feature $X_j$ while controlling for $Z$. Set $\hat{\omega}_j=|\hat{\rho}_{Y,X_j|Z}|$, where $\hat{\rho}_{Y,X_j|Z}$ is an estimator of $\rho_{Y,X_j|Z}$. Specifically, let $\tilde{\epsilon}$ and $\tilde{\zeta}_j$ represent the partial residuals obtained by regressing $Y$ and $X_j$ on $Z$, respectively, then
\begin{equation}
\rho_{Y,X_j|Z}=\rho_{\tilde{\epsilon},\tilde{\zeta}_j}=\frac{\mathrm{cov}(\tilde{\epsilon},\tilde{\zeta}_j)}{\sqrt{\mathrm{var}(\tilde{\epsilon})}\sqrt{\mathrm{var}(\tilde{\zeta}_j)}}, \label{0} 
\end{equation}
where $\rho_{A,B} $ denotes the Pearson correlation between two variables $A$ and $B$. Since the partial residuals $\tilde{ \epsilon}$ and $\tilde{ \zeta} _j$ can eliminate the influence of $Z$ on $Y$ and $X_j$, the above Pearson correlation between $\tilde{ \epsilon} $ and $\tilde{ \zeta} _j$ are called Pearson partial correlation, referred to as partial correlation. Note that equation \eqref{0} can be equivalently expressed as
\begin{equation}
\rho_{Y,X_j|Z}=(\rho_{Y,X_j}-\rho_{X_j,Z}\rho_{Y,Z}) / \sqrt{(1-\rho_{X_j,Z}^2)(1-\rho_{Y,Z}^2)}. \label{1}
\end{equation}
Thus, we can obtain $\hat{\rho}_{Y,X_j|Z}$ as
$\hat{\rho}_{Y,X_j|Z}=(\hat{\rho}_{Y,X_j}-\hat{\rho}_{X_j,Z}\hat{\rho}_{Y,Z})/\sqrt{(1-\hat{\rho}_{X_j,Z}^2)(1-\hat{\rho}_{Y,Z}^2)} $, 
where $\hat{\rho}$ represents the sample correlation.

\subsection{Distributed Feature Screening Methods}
\subsubsection{Aggregated Feature Screening}
When $N$ and $p$ are both large, it is difficult to compute the $\left\{ \hat{ \omega} _j\right\}^p_{j=1} $ based on the full data set $\mathcal{D}$. Inspired by the idea of "divide and conquer", we consider a distributed computation method. We divide $\mathcal{D}$ into $K$ disjoint subsets $\left\{ \mathcal{D}_k\right\}^K_{k=1} $, each of which contains $n=N/K$ elements. 
To simplify, we use $\hat{\rho}_{k,j}$ to represent $\hat{\rho}_{Y,X_j|Z}$ estimated based on $\mathcal{D}_{k}$. A simple idea is to compute an averaged correlation estimator
\begin{equation}
 \bar{\omega}_j=|\frac{1}{K}\sum^{K}_{k=1} \hat{\rho}_{k,j}|. \label{4}  
\end{equation}

By retaining the features with the large value of $\bar{\omega}_j$, we could get an estimator of $\mathcal{M}$, denoted as $\bar{\mathcal{M}}$. We name this method as the Simple Average distributed Partial correlation Screening (SAPS). The method based on simple average is easy to operate, but the estimator $\bar{\omega}_j$ is prone to bias, especially when $K$ is large so that the size of each subset of data is small. 

Motivated by \citet{r2}, we propose the following improved distributed estimator of partial correlation. Specifically, from \eqref{1}, the partial correlation $\rho_{Y,X_j|Z}$ can be expressed as a function of some parameters, i.e.,  
\begin{equation}
\rho_{Y,X_j|Z}=g(\theta_{j,1},\ldots,\theta_{j,9}), \label{5}
\end{equation}
where $\theta_{j,1}=E(X_jY), \theta_{j,2}=E(X_jZ),\theta_{j,3}=E(YZ),
\theta_{j,4}=E(X_j), \theta_{j,5}=E(Y), \theta_{j,6}=E(Z),
\theta_{j,7}=E(X^2_j), \theta_{j,8}=E(Y^2)$ and $\theta_{j,9}=E(Z^2)$.

On each subsets $\mathcal{D}_k$, we can estimate $\theta_{j,s} (s=1,\ldots,9) $ by a local U-statistic
\begin{equation}
\hat{\theta}_{j,s}^k=\frac{1}{n}\sum_{i_1 \in \mathcal{S}_k}\hat{\theta}_{j,s}(W_{i_1,j}), \label{6}
\end{equation}
where $\hat{\theta}_{j,s}(W_{i_1,j})$ is the kernel of U-statistic, $W_{j}=(Y,X_j,Z)$, $\mathcal{S}_k$ is the index set of $\mathcal{D}_k$. Then we obtain an estimator of $\omega_j$ as
\begin{equation}
\tilde{\omega}_j=|g(\bar{\theta}_{j,1},\ldots,\bar{\theta}_{j,9})|, \label{7}
\end{equation}
where $\bar{\theta}_{j,s}=\frac{1}{K}\sum^{K}_{k=1}\hat{\theta}_{j,s}^k$. As a result, we can obtain an estimator of $\mathcal{M}$ based on $\tilde{\omega}_j$ as
$$ \tilde{\mathcal{M}}=\left\{  j:\tilde{\omega}_j \geq \gamma,j=1,\ldots,p \right\}.$$
We name this method as the Aggregated Component distributed Partial correlation Screening(ACPS).

\subsubsection{Debiased Feature Screening}
The shortcoming of SAPS is that it could produce a bias. A natural idea is to subtract a value from the simple average estimator to eliminate the bias. Following the method of \citet{r1}, we use the jackknife method to construct a new distributed estimator of the partial correlation.

We use $\mathcal{D}_{k,-i}$ to denote the data set $\mathcal{D}_k$ with the $i$th sample observation being removed. Define
\begin{equation}
\hat{\Delta}_{k,j}=n^{-1}(n-1)\sum^n_{i=1}\hat{\rho}_{k,j,-i}-(n-1)\hat{\rho}_{k,j}, \label{8}
\end{equation}
and let $\hat{\rho}_{k,j,-i}$ be the $\hat{\rho}_{Y,X_j|Z}$ based on $\mathcal{D}_{k,-i}$. Then we obtain another estimator of $\omega_j$ as
\begin{equation}
\bar{\omega}^D_{j}=|\frac{1}{K}\sum^{K}_{k=1}(\hat{\rho}_{k,j}-\hat{\Delta}_{k,j})|. \label{9}
\end{equation}
Accordingly, we obtain an estimator of $\mathcal{M}$ as
$$ \bar{\mathcal{M}}^D=\left\{  j:\bar{\omega}^D_j \geq \gamma,j=1,\ldots,p \right\}, $$
we name this method as the Jackknife Debiased distributed Partial correlation Screening(JDPS).

\section{Theoretical Property}
We now provide some theoretical properties of SAPS, ACPS and JDPS. To this end, we assume the following conditions.
\begin{condition}\label{a1}
There exists two constants $\kappa_0$ and $D_0$ such that, for any $0 \leq \kappa \leq \kappa_0$, for all $s=1,\ldots, 9$ and $j=1,\ldots,p$,  $E\left\{ \exp(\kappa|\hat{\theta}_{j,s}|)\right\}<D_0 $.
\end{condition}

\begin{condition}\label{a2}
The variance of each component of $(Y,\mathbf{X},Z)$ is not 0, and the absolute value of the correlation between any two variables in $(Y,\mathbf{X},Z)$ is not 1.
\end{condition}

\begin{condition}\label{a3}
There exists two constants $c>0$ and $0<\tau< 1/2$ such that $\min_{j \in \mathcal{M}}\omega_j \geq 2cN^{-\tau}$.
\end{condition}

Condition \ref{a1} assumes a sub-exponential distribution for $\hat{\theta}_{j,s}$, which guarantee the boundedness of the finite-order moment of the variable, this condition is widely used in high-dimensional feature screening (e.g., \citet{r2} and \citet{r1}). In fact, if we assume that each component of $(Y,\mathbf{X},Z)$ follows a sub-Gaussian distribution, we can infer that Condition \ref{a1} holds. Condition \ref{a2}, which requires that the features are non-degenerate and that there is no perfect collinearity between the features, can help simplify the proof. Condition \ref{a3} is the minimum signal condition, which requires that the partial correlation of all important features cannot be too small, that is, the minimum signal cannot be too weak. Under the above conditions, we have the following conclusion.

\begin{proposition}\label{pro1}
Under condition \ref{a1}, we have $ \max_{1\leq j \leq p,1\leq s \leq 9 }\mathrm{var}(\bar{\theta}_{j,s})=O(\frac{1}{N}) $.    
\end{proposition}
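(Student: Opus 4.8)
The plan is to first observe that the aggregated estimator $\bar{\theta}_{j,s}$ collapses to an ordinary global sample mean, and then to control its variance by turning the sub-exponential bound of Condition \ref{a1} into a uniform bound on the second moment of the kernel. Each parameter $\theta_{j,s}$ is the expectation of a function of a single observation $W_j=(Y,X_j,Z)$, so the local estimator in \eqref{6} is a degree-one $U$-statistic, i.e.\ a plain average. Summing over the $K$ disjoint subsets then gives
\[
\bar{\theta}_{j,s}=\frac{1}{K}\sum_{k=1}^{K}\frac{1}{n}\sum_{i\in\mathcal{S}_k}\hat{\theta}_{j,s}(W_{i,j})=\frac{1}{N}\sum_{i=1}^{N}\hat{\theta}_{j,s}(W_{i,j}),
\]
because $Kn=N$ and the index sets $\mathcal{S}_1,\ldots,\mathcal{S}_K$ partition $\{1,\ldots,N\}$. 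Thus $\bar{\theta}_{j,s}$ does not in fact depend on the chosen partition, and the distributed structure plays no role in its variance.

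First I would invoke the i.i.d.\ assumption on $(Y_i,\mathbf{X}_i,Z_i)$ to write, for every $j$ and $s$,
\[
\mathrm{var}(\bar{\theta}_{j,s})=\frac{1}{N}\,\mathrm{var}\bigl(\hat{\theta}_{j,s}(W_j)\bigr)\le\frac{1}{N}\,E\bigl[\hat{\theta}_{j,s}(W_j)^2\bigr].
\]
It then suffices to bound $E[\hat{\theta}_{j,s}(W_j)^2]$ by a constant independent of $j$ and $s$, and this is exactly where Condition \ref{a1} enters. For any fixed $0<\kappa\le\kappa_0$ the elementary inequality $x^2\le (2/\kappa^2)\exp(\kappa|x|)$, which follows by retaining only the quadratic term of the exponential series, yields
\[
E\bigl[\hat{\theta}_{j,s}(W_j)^2\bigr]\le\frac{2}{\kappa^2}\,E\bigl[\exp(\kappa|\hat{\theta}_{j,s}|)\bigr]<\frac{2D_0}{\kappa^2},
\]
where the last step is precisely the sub-exponential bound of Condition \ref{a1}, which holds uniformly over $1\le j\le p$ and $1\le s\le 9$.

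Combining the two displays gives $\mathrm{var}(\bar{\theta}_{j,s})\le 2D_0/(\kappa^2 N)$ for every $j$ and $s$, and taking the maximum over the finite range of $(j,s)$ preserves the same bound, so $\max_{1\le j\le p,\,1\le s\le 9}\mathrm{var}(\bar{\theta}_{j,s})=O(1/N)$. I do not anticipate a genuine obstacle here: the only point needing care is the reduction to a single global average, after which uniformity is handed to us directly by the uniform-in-$(j,s)$ form of Condition \ref{a1}. In fact the sub-exponential hypothesis is considerably stronger than the bounded-second-moment property the argument actually consumes, so the result would survive under a much weaker assumption.
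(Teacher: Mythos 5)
Your proof is correct and follows essentially the same route as the paper's: both reduce $\mathrm{var}(\bar{\theta}_{j,s})$ to $\frac{1}{N}\mathrm{var}(\hat{\theta}_{j,s})$ (your collapse to a single global average is equivalent to the paper's two-stage variance identity $\mathrm{var}(\bar{\theta}_{j,s})=\frac{1}{K}\mathrm{var}(\hat{\theta}^k_{j,s})=\frac{1}{N}\mathrm{var}(\hat{\theta}_{j,s})$, since the kernel is degree one) and then invoke Condition~\ref{a1} for a uniform bound. The only difference is that you spell out, via $x^2\le (2/\kappa^2)\exp(\kappa|x|)$, the uniform second-moment bound that the paper simply asserts.
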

By Proposition \ref{pro1}, we see that the maximum variance of $\bar{\theta}_{j,s}$ has the same order $O(\frac{1}{N})$ as the centralized estimator for both fixed $K$ and diverging $K$. While a larger $K$ may lead to an increased variance of $\bar{\theta}_{j,s}$, such a precision loss is insignificant in the sense that the asymptotic order remains unchanged.  

\begin{theorem}\label{b1}
Under conditions \ref{a1} and \ref{a2}, for any $\epsilon>0$, there exists positive constants $c_1,\ldots,c_6$ such that
\begin{align*}
  P( \max_{1\leq j \leq p }|\bar{\omega}_j-\omega_j|\geq \epsilon ) &\leq pKc_1(1-\epsilon^2/c_1)^n, \\
  P( \max_{1\leq j \leq p }|\tilde{\omega}_j-\omega_j|\geq \epsilon ) &\leq pc_2(1-\epsilon^2/c_2)^N, \\
  P( \max_{1\leq j \leq p }|\bar{\omega}^D_j-\omega_j|\geq \epsilon ) & \leq pKc_3(1-\epsilon^2/c_3)^n+p\delta(\epsilon),
\end{align*}
where $\delta(\epsilon)=36K\exp(-c_4n\min(c^2_5\epsilon^2,c_5\epsilon))+36N\exp(-c_6(n-1))$.
\end{theorem}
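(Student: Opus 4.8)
The plan is to treat all three bounds through a common reduction, exploiting the fact that every screening utility is built from the single smooth map $g$ in \eqref{5}. Writing $\boldsymbol{\theta}_j=(\theta_{j,1},\ldots,\theta_{j,9})$ and recalling from \eqref{1} that $g$ is a rational function of these nine moments whose denominator is $\sqrt{(1-\rho_{X_j,Z}^2)(1-\rho_{Y,Z}^2)}$ times the relevant standard deviations, Condition \ref{a2} guarantees that this denominator and the underlying variances are bounded away from zero at the population values, uniformly in $j$. Hence there is a fixed neighborhood of $\boldsymbol{\theta}_j$ on which $g$ is bounded and Lipschitz with a constant $L$ independent of $j$. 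Combining the reverse triangle inequality $\big||a|-|b|\big|\le|a-b|$ with this Lipschitz bound, I would reduce each event $\{|\widehat\omega_j-\omega_j|\ge\epsilon\}$, on the ``good'' event that the relevant moment estimates lie in the neighborhood, to one of the form $\{\sum_{s=1}^9|\widehat\theta_{j,s}-\theta_{j,s}|\ge \epsilon/L\}$, at the cost of adding the probability of the complementary ``bad'' event where some empirical denominator is near-degenerate.

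For the moment deviations I would invoke Condition \ref{a1}: each kernel $\widehat\theta_{j,s}(W_{i,j})$ is sub-exponential, so a Bernstein-type inequality (of the form used in \citet{r2}) yields $P(|\,\cdot\,|\ge t)\le c(1-t^2/c)^{m}$ for a sample of size $m$, which is exactly the stated $(1-\epsilon^2/c)^{\bullet}$ form. For ACPS this is immediate: because each $\widehat\theta_{j,s}^{\,k}$ in \eqref{6} is a local average, the aggregate $\bar\theta_{j,s}$ appearing in \eqref{7} coincides with the full-sample mean and therefore concentrates at rate $N$; a single union bound over $j$ gives $p\,c_2(1-\epsilon^2/c_2)^N$. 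For SAPS I would instead use that $\bar\omega_j$ averages the per-subset estimators $\widehat\rho_{k,j}=g(\widehat{\boldsymbol\theta}_j^{\,k})$, so that an average of nonnegative deviations exceeding $\epsilon$ forces some single term $|\widehat\rho_{k,j}-\rho_{Y,X_j|Z}|$ to exceed $\epsilon$; a union bound over the $pK$ pairs $(j,k)$, each controlled at subset rate $n$, produces $pK\,c_1(1-\epsilon^2/c_1)^n$. Because I bound deviation from the true $\rho$ rather than from $E[\widehat\rho_{k,j}]$, the finite-sample bias of $\widehat\rho_{k,j}$ never has to be isolated for these two estimators.

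The JDPS bound is where the work concentrates, and I expect it to be the main obstacle. Here the debiased term $\widehat\rho_{k,j}-\widehat\Delta_{k,j}=n\,\widehat\rho_{k,j}-\tfrac{n-1}{n}\sum_{i}\widehat\rho_{k,j,-i}$ multiplies leave-one-out differences by a prefactor of order $n$, so a naive bound would blow up. The plan is to Taylor-expand $g$ to second order about the subset sample moments and exploit that each leave-one-out perturbation $\widehat\rho_{k,j,-i}-\widehat\rho_{k,j}$ is of order $1/n$ with an explicitly bounded influence function, so that the order-$n$ prefactor is absorbed and the jackknife removes the $O(1/n)$ bias. The resulting main fluctuation is handled exactly as in SAPS, giving $pK\,c_3(1-\epsilon^2/c_3)^n$, while the second-order remainder is a sub-exponential quadratic form whose Bernstein control produces the $36K\exp(-c_4 n\min(c_5^2\epsilon^2,c_5\epsilon))$ piece of $\delta(\epsilon)$. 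The delicate point is that this expansion and the $O(1/n)$ influence-function bound are valid only while every leave-one-out denominator stays bounded away from zero; controlling the probability that some one of the $N$ deletions has a near-degenerate empirical denominator, via a crude union bound over all $N$ leave-one-out computations, is precisely what yields the $\epsilon$-free term $36N\exp(-c_6(n-1))$. Assembling these pieces and taking a final union bound over $j$ gives $pK\,c_3(1-\epsilon^2/c_3)^n+p\,\delta(\epsilon)$.
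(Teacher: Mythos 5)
Your proposal is correct and follows essentially the same route as the paper's proof: sub-exponential concentration of the nine moment estimators at rate $N$ (ACPS) or $n$ (per machine), propagated through $g$ --- the paper does this via the stability Lemma \ref{b5} rather than your explicit Lipschitz-neighborhood argument, which is the same in substance --- followed by the identical union-bound reductions over $p$ and over the $pK$ pairs for ACPS and SAPS. For JDPS the paper likewise splits off the jackknife correction $\hat{\Delta}_{k,j}$, expands $g$ (first-order mean-value form rather than your second-order expansion, a minor technical variant), bounds the derivative on a good event and applies the sub-exponential Bernstein inequality (Lemma \ref{b6}) to obtain the $36K\exp(-c_4 n\min(c_5^2\epsilon^2,c_5\epsilon))$ piece, with a union bound over near-degenerate leave-one-out events yielding the $36N\exp(-c_6(n-1))$ piece, exactly as you anticipated.
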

Theorem \ref{b1} states that the three estimators converge in probability to $\omega_j$, that is, the consistency of the screening utility is satisfied.

\begin{theorem}[Sure Screening Property]\label{b2}
Let $\gamma=cN^{-\tau}$. Under conditions \ref{a1}-\ref{a3}, there exists positive constants $d_1,\ldots,d_6$ such that
\begin{align*}
  P( \mathcal{M}\subseteq \bar{\mathcal{M}} ) &\geq 1-|\mathcal{M}|Kd_1(1-N^{-2\tau}/d_1)^n, \\
  P( \mathcal{M}\subseteq \tilde{\mathcal{M}} ) &\geq 1-|\mathcal{M}|d_2(1-N^{-2\tau}/d_2)^N, \\
  P( \mathcal{M}\subseteq \bar{\mathcal{M}}^D ) &\geq 1-|\mathcal{M}|Kd_3(1-N^{-2\tau}/d_3)^n-|\mathcal{M}|\delta(cN^{-\tau}),
\end{align*}
where $\delta(cN^{-\tau})=36K\exp(-d_4n\min(d^2_5N^{-2\tau},d_5N^{-\tau}))+36N\exp(-d_6(n-1))$, and $|\mathcal{M}|$ represents the number of elements in set $\mathcal{M}$. 
\end{theorem}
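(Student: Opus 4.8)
The plan is to obtain the sure screening property as a direct corollary of the uniform consistency bounds in Theorem \ref{b1}, via a deterministic inclusion argument followed by a union bound taken only over the active set $\mathcal{M}$. First I would record the key deterministic observation. Fix $j \in \mathcal{M}$ and recall from Condition \ref{a3} that $\omega_j \geq 2cN^{-\tau}$, together with the prescribed threshold $\gamma = cN^{-\tau}$. If the estimation error satisfies $|\hat{\omega}_j - \omega_j| < cN^{-\tau}$, then $\hat{\omega}_j > \omega_j - cN^{-\tau} \geq 2cN^{-\tau} - cN^{-\tau} = \gamma$, so that feature $j$ is retained; here $\hat{\omega}_j$ stands for any of $\bar{\omega}_j$, $\tilde{\omega}_j$, or $\bar{\omega}^D_j$. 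Taking the contrapositive, the failure event $\{\mathcal{M} \not\subseteq \hat{\mathcal{M}}\}$, which occurs exactly when some active feature is screened out, is contained in $\{\max_{j \in \mathcal{M}} |\hat{\omega}_j - \omega_j| \geq cN^{-\tau}\}$.

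Next I would pass from this maximum over $\mathcal{M}$ to a sum by the union bound and invoke the per-index tail bounds implicit in Theorem \ref{b1}. The crucial point is that each line of Theorem \ref{b1} is itself derived by a union bound over all $p$ coordinates, so the single-coordinate tail carries the same exponential rate but without the prefactor $p$; restricting the union to $\mathcal{M}$ then replaces $p$ by $|\mathcal{M}|$. Concretely, for the SAPS estimator,
\begin{equation*}
P(\mathcal{M} \not\subseteq \bar{\mathcal{M}}) \leq \sum_{j \in \mathcal{M}} P\left(|\bar{\omega}_j - \omega_j| \geq cN^{-\tau}\right) \leq |\mathcal{M}| K c_1 \left(1 - c^2 N^{-2\tau}/c_1\right)^n,
\end{equation*}
and absorbing the signal constant $c^2$ into a relabeled $d_1$ yields the stated bound. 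The ACPS and JDPS cases are identical, applying the second and third lines of Theorem \ref{b1} with $\epsilon = cN^{-\tau}$; for JDPS one also substitutes $\epsilon = cN^{-\tau}$ into $\delta(\epsilon)$, so that the inner term $\min(c_5^2 \epsilon^2, c_5 \epsilon)$ becomes $\min(c_5^2 c^2 N^{-2\tau}, c_5 c N^{-\tau})$, which matches the displayed $\delta(cN^{-\tau})$ after setting $d_5 = c_5 c$.

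I do not anticipate a genuine obstacle, as this is the standard reduction of sure screening to uniform consistency. The only points demanding care are bookkeeping: verifying that the prefactor contracts from $p$ to $|\mathcal{M}|$ precisely because the union is restricted to active indices, and relabeling the constants $c_1,\dots,c_6$ of Theorem \ref{b1} as $d_1,\dots,d_6$ so as to absorb the powers of the signal constant $c$ while leaving the exponential convergence rates in $n$ and $N$ intact.
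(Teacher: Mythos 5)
Your proposal is correct and follows essentially the same route as the paper's proof: the contrapositive/inclusion argument using Condition \ref{a3} with threshold $\gamma = cN^{-\tau}$, a union bound restricted to $\mathcal{M}$, and the per-coordinate tail bounds established inside the proof of Theorem \ref{b1} (applied with $\epsilon = cN^{-\tau}$, constants relabeled to absorb powers of $c$). The only material in the paper's proof that you omit is the verification of the follow-up remark that the bounds are $o(1)$ when $|\mathcal{M}| = O(e^{N^{v_1}})$ and $K = O(N^{v_2})$, which is not part of the theorem statement itself.
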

Theorem \ref{b2} shows that our methods retain all important features with probability tending to 1. Thus the proposed methods enjoy the sure screening property. If $|\mathcal{M}|=O(e^{N^{v_1}})$ and $K=O(N^{v_2})$ with $v_1,v_2>0$ and $v_1<1-2\tau-v_2$, then the probability lower bound in theorem \ref{b2} approaches one as $N \to \infty$.

\begin{theorem}\label{b3}
Let $\hat{\mathcal{M}}$ denote the estimator of $ \mathcal{M} $ based on  $\bar{\omega_j}$, $\tilde{\omega_j}$ or $\bar{\omega_j}^D$. Under conditions \ref{a1}-\ref{a3}, if $\gamma=cN^{-\tau}$, we have
\begin{equation*}
  P( |\hat{\mathcal{M}}|\leq 2c^{-1}N^{\tau}\sum^{p}_{j=1}\omega_j )\geq 1-o(1).
\end{equation*}
\end{theorem}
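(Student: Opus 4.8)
The plan is to reduce the random statement about $|\hat{\mathcal{M}}|$ to a deterministic counting bound that holds on a high-probability ``good'' event, and then to control the probability of that event directly through Theorem \ref{b1}. Fix the common threshold $\gamma=cN^{-\tau}$ and write $\hat{\omega}_j$ for whichever of $\bar{\omega}_j$, $\tilde{\omega}_j$, $\bar{\omega}_j^D$ is under consideration, so that $|\hat{\mathcal{M}}|=\sum_{j=1}^p I\{\hat{\omega}_j\geq\gamma\}$. I would introduce the uniform-accuracy event $\mathcal{A}=\{\max_{1\leq j\leq p}|\hat{\omega}_j-\omega_j|<\gamma/2\}$, which is exactly the event whose complement is bounded in Theorem \ref{b1} upon taking $\epsilon=\gamma/2=cN^{-\tau}/2$.

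The key step is a deterministic chain of inequalities valid on $\mathcal{A}$. On $\mathcal{A}$, any retained index $j$ satisfies $\omega_j\geq\hat{\omega}_j-|\hat{\omega}_j-\omega_j|>\gamma-\gamma/2=\gamma/2$, so that $I\{\hat{\omega}_j\geq\gamma\}\leq I\{\omega_j>\gamma/2\}$ for every $j$. Since all $\omega_j\geq 0$, a Markov-type bound on each indicator gives $I\{\omega_j>\gamma/2\}\leq 2\omega_j/\gamma$. Summing over $j$ and substituting $\gamma=cN^{-\tau}$ yields, on $\mathcal{A}$,
\begin{equation*}
|\hat{\mathcal{M}}|=\sum_{j=1}^p I\{\hat{\omega}_j\geq\gamma\}\leq\frac{2}{\gamma}\sum_{j=1}^p\omega_j=2c^{-1}N^{\tau}\sum_{j=1}^p\omega_j,
\end{equation*}
which is precisely the bound in the statement. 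Hence $P(|\hat{\mathcal{M}}|\leq 2c^{-1}N^{\tau}\sum_{j=1}^p\omega_j)\geq P(\mathcal{A})$, and the problem reduces to showing $P(\mathcal{A})\to 1$.

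For the last step I would simply invoke Theorem \ref{b1} with $\epsilon=cN^{-\tau}/2$. For the SAPS and ACPS utilities this bounds $P(\mathcal{A}^c)$ by $pKc_1(1-c^2N^{-2\tau}/(4c_1))^n$ and $pc_2(1-c^2N^{-2\tau}/(4c_2))^N$, respectively; using $(1-x)^m\leq\exp(-mx)$ and $n=N/K$, these behave like $\exp\{\log(pK)-c'N^{1-2\tau}/K\}$ and $\exp\{\log p-c''N^{1-2\tau}\}$, which vanish under growth conditions on $(p,K)$ analogous to those stated after Theorem \ref{b2}, namely $\log p$ and $\log K$ of smaller order than $N^{1-2\tau}$. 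For JDPS the additional term $p\,\delta(cN^{-\tau}/2)$ must also be shown to be $o(1)$, but this is governed by the same exponential estimates for $\delta$ already analysed in Theorem \ref{b2}.

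I do not anticipate a genuine obstacle: the content is a standard two-part decomposition (deterministic counting plus uniform consistency), and the only place requiring a little care is matching the tail term $p\,\delta(\cdot)$ for the debiased JDPS estimator to the growth regime so that it is truly negligible; everything else follows directly from Theorem \ref{b1} and the Markov bound on the indicators.
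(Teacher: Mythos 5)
Your proof is correct and follows essentially the same route as the paper: the paper also conditions on the uniform-accuracy event $\{\max_{1\leq j\leq p}|\hat{\omega}_j-\omega_j|\leq \frac{1}{2}cN^{-\tau}\}$, shows that on this event every retained index must satisfy $\omega_j\geq \frac{1}{2}cN^{-\tau}$, bounds the number of such indices by $2c^{-1}N^{\tau}\sum_{j=1}^p\omega_j$ (via a contradiction argument equivalent to your Markov bound on the indicators), and then invokes Theorem \ref{b1} with $\epsilon=\frac{1}{2}cN^{-\tau}$ under the growth conditions on $(p,K)$. Your direct indicator inequality $I\{\omega_j>\gamma/2\}\leq 2\omega_j/\gamma$ is just a cleaner phrasing of the paper's counting step, so there is no substantive difference.
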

Similarly to Theorem \ref{b2}, if we let $p=O(e^{N^{v_1}})$ and $K=O(N^{v_2})$ with $v_1,v_2>0$ and $v_1<1-2\tau-v_2$, then the probability lower bound in Theorem \ref{b3} approaches on. Theorem \ref{b3} implies that if $\sum^{p}_{j=1}\omega_j$ is a polynomial order of $N$, even though $p$ diverges exponentially, the proposed methods can asymptotically control $|\mathcal{M}|$ to be a polynomial order of $N$.

\begin{theorem}[Ranking Consistency]\label{b4}
Let $\hat{\omega}_j$ denote three distributed estimators of $\omega_j$ as above. Under conditions \ref{a1}-\ref{a3}, if $\min_{j \in \mathcal{M}}\omega_j-\max_{j \in \mathcal{M}^c}\omega_j=\varDelta>0$, we have
\begin{equation*}
  P\left( \min_{j \in \mathcal{M}}\hat{\omega}_j>\max_{j \in \mathcal{M}^c}\hat{\omega}_j \right)\geq 1-o(1).
\end{equation*}
\end{theorem}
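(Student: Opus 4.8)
The plan is to reduce the ranking-consistency statement to the uniform-consistency bounds already established in Theorem 1. The key observation is that the event $\{\min_{j\in\mathcal{M}}\hat{\omega}_j > \max_{j\in\mathcal{M}^c}\hat{\omega}_j\}$ is implied by a \emph{uniform} approximation of $\hat{\omega}_j$ by $\omega_j$ across all coordinates, once the gap $\varDelta>0$ between the signal group and the noise group is available. First I would write the chain of inequalities that, on the event $\{\max_{1\le j\le p}|\hat{\omega}_j-\omega_j|<\varDelta/2\}$, forces the two groups to separate. Concretely, for any $j\in\mathcal{M}$ we have $\hat{\omega}_j > \omega_j - \varDelta/2 \ge \min_{\ell\in\mathcal{M}}\omega_\ell - \varDelta/2$, while for any $j\in\mathcal{M}^c$ we have $\hat{\omega}_j < \omega_j + \varDelta/2 \le \max_{\ell\in\mathcal{M}^c}\omega_\ell + \varDelta/2$. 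Subtracting and invoking the hypothesis $\min_{j\in\mathcal{M}}\omega_j-\max_{j\in\mathcal{M}^c}\omega_j=\varDelta$ shows the minimum over $\mathcal{M}$ strictly exceeds the maximum over $\mathcal{M}^c$. Hence
\begin{equation*}
P\Bigl(\min_{j\in\mathcal{M}}\hat{\omega}_j > \max_{j\in\mathcal{M}^c}\hat{\omega}_j\Bigr) \ge P\Bigl(\max_{1\le j\le p}|\hat{\omega}_j-\omega_j| < \varDelta/2\Bigr).
\end{equation*}

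Next I would bound the complementary probability $P(\max_{1\le j\le p}|\hat{\omega}_j-\omega_j|\ge \varDelta/2)$ by plugging $\epsilon=\varDelta/2$ into the appropriate line of Theorem 1, handling each of the three estimators $\bar{\omega}_j$, $\tilde{\omega}_j$, $\bar{\omega}^D_j$ separately. For instance, for the SAPS estimator the bound becomes $pKc_1(1-\varDelta^2/(4c_1))^n$, and analogously for ACPS and JDPS. The remaining task is to argue that each of these upper bounds is $o(1)$ under the dimensionality and subset-number growth regime assumed in the theorem (namely $p=O(e^{N^{v_1}})$ and $K=O(N^{v_2})$ with $v_1<1-2\tau-v_2$, carried over from the discussion following Theorem 2 and 3). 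The factor $(1-\varDelta^2/(4c_1))^n$ decays like $\exp(-c\,n)$ for a constant $c>0$ depending on $\varDelta$, so the product $pK\exp(-cn)$ tends to zero because the exponential decay in $n=N/K$ dominates the sub-exponential growth of $p$ and the polynomial growth of $K$; the JDPS case requires the additional verification that the extra term $\delta(\varDelta/2)$ is likewise negligible, which follows from the same exponential-versus-polynomial comparison.

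The argument is almost entirely bookkeeping once the reduction to Theorem 1 is in place, so I do not anticipate a genuine analytic obstacle. The one point deserving care is the implicit rate condition: Theorem 4 as stated treats $\varDelta$ as a fixed positive constant, which makes $\varDelta^2/(4c_i)$ a constant and the decay genuinely geometric in $n$; if instead one wished to allow $\varDelta$ to shrink with $N$, the comparison between $\varDelta^2 n$ and $\log(pK)$ would need to be made explicit, and this is where the growth conditions on $p$ and $K$ would enter quantitatively. I would therefore state clearly that the fixed-gap assumption is what renders the $o(1)$ conclusion immediate, and confirm that for diverging $K=O(N^{v_2})$ the quantity $n=N/K$ still grows polynomially so that $\exp(-cn)$ overwhelms $pK$ under $v_1<1-2\tau-v_2$. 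Assembling these three estimator-specific bounds yields the common lower bound $1-o(1)$ claimed in the statement.
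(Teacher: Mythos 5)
Your proposal is correct and follows essentially the same route as the paper: both arguments reduce the separation event to the uniform deviation event $\{\max_{1\leq j\leq p}|\hat{\omega}_j-\omega_j|<\varDelta/2\}$ (the paper phrases this via the complement and a triangle-inequality bound, which is the contrapositive of your chain of inequalities) and then invoke Theorem \ref{b1} with $\epsilon=\varDelta/2$. Your explicit verification that the resulting bounds are $o(1)$ under $p=O(e^{N^{v_1}})$, $K=O(N^{v_2})$, $v_1<1-2\tau-v_2$ is a welcome piece of bookkeeping that the paper leaves implicit, but it is not a different method.
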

Theorem \ref{b4} implies that our methods enjoy the ranking consistency
property, that is, our methods can rank an important feature above an unimportant feature with probability approaching one. It is not necessary that $\varDelta$ is a constant. In fact, if $\varDelta \geq O(N^{-\tau})$ and $(p,K)$ satisfies the same conditions as before, the theorem still holds.

\section{Distributed Feature Screening Based on FDR Control}

In Section 2, we have assumed that $\gamma$ is a pre-specified threshold. In Section 3, we prove that our feature screening methods enjoy sure screening property when $\gamma=cN^{-\tau}$ with a constant $c$. However, it is not easy to determine an appropriate value of $\gamma$ in practice. A too large value of $\gamma$ may lead to the omission of some important features, while a small value of $\gamma$ may cause many unimportant features to be falsely selected. Although the constant $c$ can be selected by some data-driven information criteria. But an information criterion strictly depends on a correctly specified model, which is also difficult in real data analysis. Inspired by the work of \citet{r7}, we investigate a distributed feature screening method, which determines the threshold through knockoff features while controlling FDR. 

\subsection{Knockoff Features}
The concept of knockoff variables was first proposed in \citet{r8}. We say $\tilde{\mathbf{X}}={(\tilde{X_1},\ldots,\tilde{X_p})}^{T}$ is a knockoff copy of $\mathbf{X}$ if it satisfies the following conditions.

\begin{condition}\label{a4}
(a) For any $S\subset \{1,\ldots,p\} $, $ (\mathbf{X},\tilde{\mathbf{X}})=_{d}(\mathbf{X},\tilde{\mathbf{X}})_{swap(S)} $;
(b) $\tilde{\mathbf{X}}\perp\!\!\!\!\perp Y|\mathbf{X}$, that is, given $\mathbf{X}$, $Y$ is independent of $\tilde{\mathbf{X}}$.
\end{condition}
In the above condition, "$=_{d}$" means that both sides have the same distribution, and $(\mathbf{ X} ,\tilde{ \mathbf{ X}} )_{ swap(S)} $ means that for all $j \in S$, the position of $\tilde{ \mathbf{ X_j}} $ and $\mathbf{ X_j} $ is swapped. In order to adapt to the partial correlation considered in this paper, we modify the above definition slightly as follows.

\begin{condition}\label{a5}
(a) For any $S\subset \{1,\ldots,p\} $, $ (\mathbf{X},\tilde{\mathbf{X}})=_{d}(\mathbf{X}, \tilde{\mathbf{X}})_{swap(S)} $;
(b) $\tilde{\mathbf{X}}\perp\!\!\!\!\perp (Y,Z)|\mathbf{X}$, that is, given $\mathbf{X}$, $(Y,Z)$ is independent of $\tilde{\mathbf{X}}$.
\end{condition}

Condition \ref{a5}(b) is satisfied if $\tilde{\mathbf{X}}$ is generated without using the information of $(Y,Z)$. But generating the knockoff features that exactly follow condition \ref{a5}(a) requires the information of the distribution of $\mathbf{X}$ to be known in advance, which is usually not available in practice. Thus, we consider constructing approximately second-order knockoff features, keeping the mean and covariance of $(\mathbf{X},\tilde{\mathbf{X}})_{swap(S)}$ the same as that of $(\mathbf{X},\tilde{\mathbf{X}})$. According to the method of \citet{r8}, we use $\mathcal{X} \in R^{n\times p}$ to represent the sample matrix of $\mathbf{X}$ with $n$ samples, and construct $\tilde{\mathcal{X}}$ based on $\mathcal{X}$. Let $\Sigma=\mathcal{X}^T\mathcal{X}$. It follows that
$$ \tilde{\mathcal{X}}=\mathcal{X}(I-\Sigma^{-1}\mathrm{diag}\{\mathbf{s}\})+\tilde{U}C, $$
where $\tilde{U} \in R^{n\times p}$ is an orthogonal matrix,  $\tilde{U}^T\mathcal{X}=\mathbf{0}$, $C^TC=2\mathrm{diag}\{\mathbf{s}\}-\mathrm{diag}\{\mathbf{s}\}\Sigma^{-1}\mathrm{diag}\{\mathbf{s}\} \succeq \mathbf{0}$, $A \succeq \mathbf{0}$ indicates that $A$ is a positive semi-definite matrix, $\mathrm{diag}\{\mathbf{s}\}$ is the diagonal matrix generated by the vector $\mathbf{s}$, and $\mathrm{diag}\{\mathbf{s}\} \succeq \mathbf{0}$ and $ 2\Sigma \succeq \mathrm{diag}\{\mathbf{s}\} $. Then we have
\begin{equation}
  [\mathcal{X}\quad \tilde{\mathcal{X}}]^T[\mathcal{X}\quad \tilde{\mathcal{X}}]= \begin{bmatrix} \Sigma & \Sigma-\mathrm{diag}\{\mathbf{s}\} \\ \Sigma-\mathrm{diag}\{\mathbf{s}\} & \Sigma \end{bmatrix}.\\ 
\end{equation}

\citet{r8} introduced two approaches to construct $\mathrm{diag}\{\mathbf{s}\}$. The first approach is known as the equicorrelated construction, which sets 
$$ s_j=2\lambda_{\min}(\Sigma)\land 1 \quad j=1,\ldots,p, $$ 
where the diagonal elements of $\Sigma$ are all one by scaling the data, and $\lambda_{\min}(\Sigma)$ denotes the smallest eigenvalue of $\Sigma$. Another method, named semidefinite programming, construct it by solving an optimization problem:
$$ \min_{s_j} \sum_{j}|1-s_j|\quad s.t. \quad s_j \geq 0 ,\quad 2\Sigma \succeq \mathrm{diag}\{\mathbf{s}\}. $$
Note that the above process of generating $\tilde{\mathcal{X}}$ requires $n\geq 2p$ to ensure the existence of $\tilde{U}$.

\subsection{Correlation Based on Knockoff Features}

We use $\psi_j$ to measure the partial association between the response $Y$ and the feature $X_j$ based on knockoff features. We define
\begin{equation}
    \psi_j=|\rho_{Y,X_{j}|Z}|-|\rho_{Y,\tilde{X_{j}}|Z}|=\omega(Y,X_j,Z)-\omega(Y,\tilde{X_j},Z),
\end{equation}
where $\omega(Y,\tilde{X_j},Z)$ denotes the $\omega_j$ calculated based on $(Y,\tilde{X_j},Z)$. Obviously, we can get an estimator
\begin{equation} \label{psihat}
    \hat{\psi}_j=\hat{\omega}(\mathcal{Y},\mathcal{X}_j,\mathcal{Z})-\hat{\omega}(\mathcal{Y},\tilde{\mathcal{X}}_j,\mathcal{Z}),
\end{equation}
where $(\mathcal{Y},\mathcal{X},\mathcal{Z})$ respectively denote the sample matrix or vector corresponding to $(Y,X,Z)$, and $\mathcal{X}_j$ denotes the $j$th column of $\mathcal{X}$. In Section 2, we propose three distributed estimators of $\omega_j$, which are uniformly denoted as $\hat{ \omega}_j$. We use $\hat{ \psi}_j$ to denote the three estimators of $\psi_j$, where subscripts will be marked to indicate the specific method if necessary. It is noted that $\tilde{X_j}$ is a "pseudo" feature generated based on $X_j$. If $X_j$ is an important feature, the value of $|\hat{\psi}_j|$ should be large, otherwise it should be close to 0, and the probability of $\hat{\psi}_j$ being positive or negative is the same. We have the following conclusion.

\begin{lemma}\label{knockoff-lemma}
(a)For all $j \in \mathcal{M}^c$, $\psi_j=0$;
(b)Given $|\hat{\psi}|=(|\hat{\psi}_1|,\ldots,|\hat{\psi}_p|)^T$, let $\mathcal{M}^c=\{ 
 j_1,\ldots,j_r \}$, $I_{j_1},\ldots,I_{j_r}$ follow iid 
 $Bernoulli(0.5)$, where $I_{j_k}=I( \hat{\psi}_{j_k} > 0 )$, $I(\cdot)$ denotes characteristic function.
\end{lemma}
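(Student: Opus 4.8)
The plan is to prove both parts by exploiting the exchangeability built into Condition \ref{a5}, transferring the swap symmetry of $(\mathbf{X},\tilde{\mathbf{X}})$ onto the partial correlations and then onto the signs of $\hat{\psi}$. For part (a) the goal is to show that for a null index $j\in\mathcal{M}^c$ the knockoff copy $\tilde{X}_j$ carries the same partial association with $Y$ given $Z$ as $X_j$ does, so that the two terms defining $\psi_j$ cancel. First I would establish that, for null $j$, interchanging $X_j$ and $\tilde{X}_j$ leaves the joint law of $(Y,Z,\mathbf{X},\tilde{\mathbf{X}})$ unchanged. This is obtained by combining the exchangeability of Condition \ref{a5}(a) with $S=\{j\}$ (which makes the law of $(\mathbf{X},\tilde{\mathbf{X}})$ invariant under the swap) with the conditional independence of Condition \ref{a5}(b) (which forces the conditional law of $(Y,Z)$ given $(\mathbf{X},\tilde{\mathbf{X}})$ to depend on $\mathbf{X}$ only), together with the nullness of $j$, which guarantees that this conditional law is itself insensitive to exchanging $X_j$ with $\tilde{X}_j$ at the level of the second moments that enter \eqref{1}. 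Since under the swap the functional $\rho_{Y,X_j|Z}$ evaluated on the swapped law equals $\rho_{Y,\tilde{X}_j|Z}$ evaluated on the original law, invariance yields $\rho_{Y,X_j|Z}=\rho_{Y,\tilde{X}_j|Z}$. Because $j\in\mathcal{M}^c$ means $\omega_j=|\rho_{Y,X_j|Z}|=0$, we conclude $|\rho_{Y,\tilde{X}_j|Z}|=0$ as well, hence $\psi_j=0$.

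For part (b) I would promote the single-index invariance to the subset version already supplied by Condition \ref{a5}(a) and translate it into a statement about the sign vector of $\hat{\psi}$. The estimator $\hat{\psi}_j$ in \eqref{psihat} is antisymmetric in the sense that interchanging the sample columns $\mathcal{X}_j$ and $\tilde{\mathcal{X}}_j$ replaces $\hat{\psi}_j$ by $-\hat{\psi}_j$ while leaving $|\hat{\psi}_j|$, and every other coordinate of $|\hat{\psi}|$, fixed. For any $S\subseteq\mathcal{M}^c$, simultaneously swapping $\{(\mathcal{X}_j,\tilde{\mathcal{X}}_j):j\in S\}$ preserves the distribution of the full sample (by the subset form of Condition \ref{a5}, legitimate because every $j\in S$ is null, exactly as in part (a)), flips precisely the signs $\{I_j:j\in S\}$, and fixes $|\hat{\psi}|$. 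Consequently, conditional on $|\hat{\psi}|$, the vector $(I_{j_1},\dots,I_{j_r})$ has a distribution invariant under flipping any subset of its coordinates, which forces it to be uniform on $\{0,1\}^r$; equivalently $I_{j_1},\dots,I_{j_r}$ are iid $Bernoulli(0.5)$. Here I would also note that, for continuous data, the event $\hat{\psi}_j=0$ has probability zero, so the indicators $I_{j_k}=I(\hat{\psi}_{j_k}>0)$ are well defined almost surely.

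I expect the main obstacle to be the rigorous justification of this subset-swap invariance in the present setting, namely the analogue of the knockoff exchangeability lemma when (i) the operative notion of nullness is the vanishing of a partial correlation rather than full conditional independence, (ii) an auxiliary conditioning variable $Z$ must be carried through the swap, and (iii) the utility $\hat{\omega}$ is computed in a distributed fashion over $\mathcal{D}_1,\dots,\mathcal{D}_K$. The first point requires verifying that partial-correlation nullness already suffices to keep the conditional second moments of $(Y,Z)$ given $\mathbf{X}$ unaffected by exchanging $X_j$ and $\tilde{X}_j$, so that only the second-order structure used in \eqref{1} need be controlled; the second is absorbed by observing that Condition \ref{a5}(b) shields $\tilde{\mathbf{X}}$ from $(Y,Z)$ jointly, so $Z$ plays the same passive role as $Y$; and the third follows once one checks that the per-subset knockoff construction makes the column swap act identically on each $\mathcal{D}_k$, whence the distributed aggregates $\bar{\omega}_j$, $\tilde{\omega}_j$ and $\bar{\omega}^{D}_j$ inherit both the antisymmetry and the invariance needed above.
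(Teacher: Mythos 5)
Your skeleton matches the paper's proof. For part (a) you derive swap-invariance of the joint law of $(Y,Z,\mathbf{X},\tilde{\mathbf{X}})$ at a null index by combining Condition \ref{a5}(a) (invariance of the feature block), Condition \ref{a5}(b) (the conditional law of $(Y,Z)$ given $(\mathbf{X},\tilde{\mathbf{X}})$ depends on $\mathbf{X}$ only), and nullness, concluding $(Y,Z,X_j)=_{d}(Y,Z,\tilde{X}_j)$ and hence $\psi_j=0$; for part (b) you use subset swaps over $S\subseteq\mathcal{M}^c$ together with the antisymmetry of $\hat{\psi}_j$ under exchanging the columns $(\mathcal{X}_j,\tilde{\mathcal{X}}_j)$ to get conditional sign symmetry given $|\hat{\psi}|$, hence iid $Bernoulli(0.5)$ indicators. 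This is precisely the paper's argument (its map $f$ and sign vector $\epsilon$).

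The genuine problem is how you propose to discharge the nullness step. You plan to verify only that partial-correlation nullness keeps the conditional \emph{second moments} of $(Y,Z)$ given $\mathbf{X}$ unaffected by the swap. First, this is not enough for part (b): the sign-flip argument needs the swap of columns $j\in S$ to preserve the distribution of the \emph{entire sample}, because $\hat{\psi}=f(\mathcal{Y},\mathcal{Z},(\mathcal{X},\tilde{\mathcal{X}}))$ is a nonlinear functional of the data; matching second moments cannot give $f(\mathcal{Y},\mathcal{Z},(\mathcal{X},\tilde{\mathcal{X}})_{swap(S)})=_{d}f(\mathcal{Y},\mathcal{Z},(\mathcal{X},\tilde{\mathcal{X}}))$. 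Second, the verification itself fails in general: $\omega_j=0$ does not imply $\mathrm{cov}(Y,X_j)=\mathrm{cov}(Y,\tilde{X}_j)$. By Condition \ref{a5}(b), $E[Y\tilde{X}_j]=E\{E[Y|\mathbf{X}]E[\tilde{X}_j|\mathbf{X}]\}$, and one can build an exchangeable knockoff pair whose conditional mean $E[\tilde{X}_j|X_j]$ is a nonlinear function of $X_j$, together with $Y=h(X_j)$ chosen orthogonal to $\{1,X_j\}$ but correlated with that conditional mean; then $j\in\mathcal{M}^c$ yet $\psi_j\neq 0$. What actually makes the paper's proof go through is that it silently reads nullness distributionally: in \eqref{s-2} it uses $F_{(Y,Z)|\mathbf{X}}(v|u')=F_{(Y,Z)|\mathbf{X}_{(-j)}}(v|u'_{(-j)})$, i.e. $(Y,Z)\perp\!\!\!\!\perp X_j\mid\mathbf{X}_{(-j)}$ for $j\in\mathcal{M}^c$ --- a strictly stronger notion of nullness than vanishing partial correlation, but exactly what both parts need. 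Your plan is repaired by adopting that conditional-independence reading of nullness (as in \citet{r9}) in place of the second-moment hedge; with it, the rest of your argument goes through verbatim.
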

Note that when $\hat{\psi}_j=0$, the corresponding feature will not be selected into the model. The result of Lemma \ref{knockoff-lemma} provides us with a method to control FDR. In feature screening, for a certain threshold $t>0$, FDR is defined as
$$ FDR(t)=E[FDP(t)]=E\left[ \frac{\#\{ j \in \mathcal{M}^c:\hat{\psi}_j \geq t \}}{\#\{ j :\hat{\psi}_j \geq t \}} \right], $$
where $\#\{ \cdot \}$ denotes the number of elements in the set. Since $\mathcal{M}^c$ is unknown, $\#\{ j \in \mathcal{M}^c:\hat{\psi}_j \geq t \}$ is unknown in practice. According to Lemma \ref{knockoff-lemma}, we have
$$ \#\{ j \in \mathcal{M}^c:\hat{\psi}_j \geq t \} \approx \#\{ j \in \mathcal{M}^c:\hat{\psi}_j \leq -t \} \leq \#\{ j :\hat{\psi}_j \leq -t \}. $$
To this end, we define an approximate conservative estimator of FDP(t) as
$$ \widehat{FDP}(t)=\frac{\#\{ j :\hat{\psi}_j \leq -t \}}{\#\{ j :\hat{\psi}_j \geq t \}}. $$

By selecting an appropriate threshold $t$, we can control FDP indirectly, and thus control FDR under a pre-given level $\alpha$. According to the method in \citet{r8}, we get the threshold through the following equation:
\begin{equation} \label{talpha}
    T_{\alpha}=\min \left\{ t \in \mathcal{W}: \frac{1+\#\{ j :\hat{\psi}_j \leq -t \}}{\#\{ j :\hat{\psi}_j \geq t \}} \leq \alpha \right\},
\end{equation}
where $\mathcal{W}=\{ |\hat{\psi}_j|:1\leq j \leq p \}/\{ 0 \}$. Note that we added a unit to the numerator of the expression, which can control FDP to a greater extent. 
Therefore, according to the $T_{\alpha}$ in equation \eqref{talpha}, we get a new estimator of $ \mathcal{M}$ as
\begin{equation}
    \hat{\mathcal{M}}(T_{\alpha})=\{ j :\hat{\psi}_j \geq T_{\alpha},1 \leq j \leq p \}.
\end{equation}

\subsection{FDR Control Based on Knockoff Features}
Because generating knockoff features requires that the relationship between data sample size and dimension satisfies $n \geq 2p$, the above knockoff method cannot be directly used for ultrahigh-dimensional feature screening. We consider a two-step feature screening strategy as follows. First, the data are divided into two parts on each machine. For the first part, one of the feature screening methods in Section 2 is employed to select $d$ features such that $2d<n_2$. Then, for the second part of the data on each machine, the selected $d$ features are applied to the knockoff method introduced in Section 4.1 to generate corresponding knockoff features. Second, based on the knockoff features from step 1, we further filter the features by implementing the method in Section 4.2. We summarize this procedure in Algorithm \ref{algorithm1}.

\begin{algorithm}
 \caption{ Distributed feature screening based on Knockoff}
 \label{algorithm1}
 \begin{algorithmic}[1]
   \Require sample matrix $(\mathcal{Y},\mathcal{X},\mathcal{Z})$, FDR level $\alpha$, number of machines $K$, $(n_1,n_2,d)$ that satisfies  $n_1+n_2=n=N/K, n_2>2d$;
   \State Partition equally full set $ \mathcal{D} $ into $K$ subsets and distribute them to $K$ machines, denoted as $\mathcal{D}_{(k)}=(\mathcal{Y}_{(k)},\mathcal{X}_{(k)},\mathcal{Z}_{(k)})$, on each machine, partition $\mathcal{D}_{(k)}$ into $\mathcal{D}_{(k)}^{(1)}\in R^{n_1 \times (p+2)}$ and $\mathcal{D}_{(k)}^{(2)}\in R^{n_2 \times (p+2)}$;
   \State Compute $\hat{\omega}_j$ based on $\{ \mathcal{D}_{(k)}^{(1)}, k=1,\ldots,K\}$;
   \State Perform the first feature screening to get $ \hat{\mathcal{M}}_1=\{ j:\hat{\omega}_j $ is among the largest $d$ $ \} $; 
   \State Based on $\{ \mathcal{D}_{(k)}^{(2)}, k=1,\ldots,K\}$, for all $j \in \hat{\mathcal{M}}_1$, select the corresponding data columns $\mathcal{X}_{(k)j}^{(2)}$ to form the matrix $\mathcal{X}_{(k)(\hat{\mathcal{M}}_1)}^{(2)}$, and generate the knockoff features $\tilde{\mathcal{X}}_{(k)(\hat{\mathcal{M}}_1)}^{(2)}$ according to the method in section 4.1;
   \State Based on $\{ (\mathcal{Y}_{(k)}^{(2)},\mathcal{X}_{(k)(\hat{\mathcal{M}}_1)}^{(2)},\mathcal{Z}_{(k)}^{(2)}),k=1,\ldots,K\}$ and $\{ (\mathcal{Y}_{(k)}^{(2)},\tilde{\mathcal{X}}_{(k)(\hat{\mathcal{M}}_1)}^{(2)},\mathcal{Z}_{(k)}^{(2)}),k=1,\ldots,K\}$, compute $\hat{\omega}(\mathcal{Y},\mathcal{X}_j,\mathcal{Z})$ and $\hat{\omega}(\mathcal{Y},\tilde{\mathcal{X}}_j,\mathcal{Z}),~ j \in \hat{\mathcal{M}}_1$;
   \State Based on equation \eqref{psihat}, compute $\hat{\psi}_j$;
   \State Based on equation \eqref{talpha}, choose the threshold $T_{\alpha}$;
   \Ensure Obtain $\hat{\mathcal{M}}(T_{\alpha})=\{ j :\hat{\psi}_j \geq T_{\alpha},j \in \hat{\mathcal{M}}_1 \}$.
 \end{algorithmic}
\end{algorithm}

To ensure that the first step satisfies the sure screening property, we choose as large $d$ as possible when implementing the Algorithm \ref{algorithm1}, so that $2d<n_2$ holds. At the same time, to generate better knockoff features, we also choose $n_2$ as large as possible. The following theorems provide some theoretical properties about the Algorithm \ref{algorithm1}.

\begin{theorem}\label{knock-th}
Let $\mathcal{E}$ denote the event $\{ \mathcal{M} \subseteq \hat{\mathcal{M}}_1 \}$. If the knockoff features satisfy condition \ref{a5}, then for any $\alpha \in [0,1]$, we have
$$ FDR=E\left[ \frac{\#\{j: j \in \mathcal{M}^c \cap \hat{\mathcal{M}}(T_{\alpha}) \}}{\#\{ j :j \in  \hat{\mathcal{M}}(T_{\alpha}) \}\lor 1 }\bigg|\mathcal{E} \right]\leq \alpha. $$
\end{theorem}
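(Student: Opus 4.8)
The plan is to prove the bound conditionally on $\mathcal{E}$, reduce the false discovery proportion to a quantity governed purely by the null knockoff statistics, and then control that quantity by an optional-stopping argument built on the coin-flip property of Lemma \ref{knockoff-lemma}. First I would fix notation: writing $\hat{\psi}_j$ for the knockoff statistics computed on the second split, set $V^{+}(t)=\#\{j\in\hat{\mathcal{M}}_1\cap\mathcal{M}^c:\hat{\psi}_j\ge t\}$, $V^{-}(t)=\#\{j\in\hat{\mathcal{M}}_1\cap\mathcal{M}^c:\hat{\psi}_j\le -t\}$, and $R(t)=\#\{j\in\hat{\mathcal{M}}_1:\hat{\psi}_j\ge t\}$. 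The crucial structural observation is that $\mathcal{E}=\{\mathcal{M}\subseteq\hat{\mathcal{M}}_1\}$ is measurable with respect to the first split $\mathcal{D}^{(1)}$, whereas all the $\hat{\psi}_j$ and their signs are functions of the independent second split $\mathcal{D}^{(2)}$; hence conditioning on $\mathcal{E}$ does not disturb the conditional law of the null signs, so the iid $\mathrm{Bernoulli}(1/2)$ property of Lemma \ref{knockoff-lemma}(b) remains valid for the nulls lying in $\hat{\mathcal{M}}_1\cap\mathcal{M}^c$ given the magnitudes $|\hat{\psi}|$.

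Second, I would carry out the algebraic reduction. Under $\mathcal{E}$ the false discoveries of $\hat{\mathcal{M}}(T_\alpha)$ are exactly $\{j\in\hat{\mathcal{M}}_1\cap\mathcal{M}^c:\hat{\psi}_j\ge T_\alpha\}$, so $FDP(T_\alpha)=V^{+}(T_\alpha)/(R(T_\alpha)\vee 1)$. I would then factor
\[
FDP(T_\alpha)=\frac{V^{+}(T_\alpha)}{1+V^{-}(T_\alpha)}\cdot\frac{1+V^{-}(T_\alpha)}{R(T_\alpha)\vee 1},
\]
bound $V^{-}(T_\alpha)\le \#\{j\in\hat{\mathcal{M}}_1:\hat{\psi}_j\le -T_\alpha\}$ and invoke the defining inequality of $T_\alpha$ in \eqref{talpha} to conclude that the second factor is at most $\alpha$. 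This yields $FDP(T_\alpha)\le \alpha\,V^{+}(T_\alpha)/(1+V^{-}(T_\alpha))$, so it suffices to show
\[
E\!\left[\frac{V^{+}(T_\alpha)}{1+V^{-}(T_\alpha)}\,\Big|\,\mathcal{E}\right]\le 1.
\]

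Third, the heart of the proof, I would establish this last inequality by optional stopping. Conditioning on $|\hat{\psi}|$ and on the total counts of null positives and negatives, I reveal the null signs in order of increasing magnitude and consider the process $M(t)=V^{+}(t)/(1+V^{-}(t))$ indexed by increasing $t$. Using that each newly uncovered null sign is, given the magnitudes, an independent fair coin (Lemma \ref{knockoff-lemma}(b)), a one-step computation shows $M(t)$ is a supermartingale for this filtration: it is a genuine martingale in the interior and loses mass only at the boundary $V^{-}=0$. Since $T_\alpha$ depends on the statistics only through the events $\{\hat{\psi}_j\ge t\}$ and $\{\hat{\psi}_j\le -t\}$, it is a stopping time for this filtration, and optional stopping gives $E[M(T_\alpha)\mid\mathcal{E}]\le E[J^{+}/(1+J^{-})\mid\mathcal{E}]$, where $J^{\pm}$ are the total numbers of null statistics of each sign; a direct binomial computation gives $E[J^{+}/(1+J^{-})]=1-2^{-m_0}\le 1$ with $m_0=|\hat{\mathcal{M}}_1\cap\mathcal{M}^c|$. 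Combining the three steps yields $FDR\le\alpha$.

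The step I expect to be the main obstacle is the rigorous justification of the martingale/optional-stopping argument: one must pin down the correct filtration (revealing null signs from smallest to largest magnitude, after conditioning on the magnitudes and the sign totals), verify the supermartingale inequality including the boundary behaviour when $V^{-}=0$, and confirm that $T_\alpha$ is genuinely a stopping time for that filtration. A secondary delicate point, easily overlooked, is that the coin-flip property of Lemma \ref{knockoff-lemma} must be applied to the \emph{data-dependent} index set $\hat{\mathcal{M}}_1$; this is legitimate precisely because $\hat{\mathcal{M}}_1$ and the conditioning event $\mathcal{E}$ are built from $\mathcal{D}^{(1)}$ while the signs live on the independent split $\mathcal{D}^{(2)}$, so that restricting an iid family of coin flips to a subset independent of the coins preserves the iid structure.
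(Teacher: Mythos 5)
Your proof is correct and takes essentially the same route as the paper: the paper's own proof consists of noting that Lemma \ref{knockoff-lemma} gives the distributed statistics the same sign-flip property as in \citet{r7} and then invoking the knockoff+ argument of that reference, which is precisely the FDP factorization, the supermartingale for $V^{+}(t)/(1+V^{-}(t))$, the optional-stopping step, and the binomial bound $E[J^{+}/(1+J^{-})]\leq 1$ that you reconstruct explicitly. The only difference is explicitness: the paper outsources the martingale details and the data-splitting justification for applying Lemma \ref{knockoff-lemma} to the data-dependent set $\hat{\mathcal{M}}_1$ to the cited work, whereas you spell them out (correctly, including the boundary case $V^{-}=0$ and the measurability of $\mathcal{E}$ and $\hat{\mathcal{M}}_1$ with respect to the first split).
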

Theorem \ref{knock-th} shows that our two-step method can control FDR under the given level $\alpha$. Since the first step of the algorithm satisfies sure screening property, that is, $\mathcal{E}$ occurs with probability tending to 1, the FDR can be controlled even without conditioning on $\mathcal{E}$. This is verified by our simulation results.

\begin{theorem}\label{knock-th2}
Under the conditions of Theorem \ref{knock-th} and conditions \ref{a1} and \ref{a2}, if there exist constants $c>0, 0<\tau<1/2$ such that $\min_{j \in \mathcal{M}}\psi_j \geq 4c(Kn_2)^{-\tau} $, then (a) if $\alpha \geq 1/|\mathcal{M}|$, we have $P( \mathcal{M} \subseteq \hat{\mathcal{M}}(T_{\alpha}) | \mathcal{E} ) \geq 1-o(1) $; (b) if $\alpha < 1/|\mathcal{M}|$, we have $P( \{\mathcal{M} \subseteq \hat{\mathcal{M}}(T_{\alpha})\} \cup \{ \hat{\mathcal{M}}(T_{\alpha})=\emptyset \} | \mathcal{E} ) \geq 1-o(1) .$
\end{theorem}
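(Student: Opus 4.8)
The plan is to combine a uniform concentration bound for the knockoff statistics with a purely deterministic analysis of the data-driven threshold $T_\alpha$, working throughout conditionally on the event $\mathcal{E}=\{\mathcal{M}\subseteq\hat{\mathcal{M}}_1\}$ so that every important feature is present at the knockoff stage. The central object is the good event
$$ G=\Bigl\{\max_{j\in\hat{\mathcal{M}}_1}|\hat{\psi}_j-\psi_j|\le \epsilon\Bigr\},\qquad \epsilon=c(Kn_2)^{-\tau}. $$
On $G\cap\mathcal{E}$ the signal and noise statistics separate by a fixed gap, which is exactly what the threshold argument needs.

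First I would establish $P(G\mid\mathcal{E})\ge 1-o(1)$. Writing $\hat{\psi}_j-\psi_j=[\hat{\omega}(\mathcal{Y},\mathcal{X}_j,\mathcal{Z})-\omega(Y,X_j,Z)]-[\hat{\omega}(\mathcal{Y},\tilde{\mathcal{X}}_j,\mathcal{Z})-\omega(Y,\tilde{X}_j,Z)]$ and applying the triangle inequality reduces the task to controlling each $\hat{\omega}$ around its population target. The second-stage data consist of $K$ machines with $n_2$ observations each (total $Kn_2$) and involve only the $d=|\hat{\mathcal{M}}_1|$ retained features together with their knockoffs, so I can invoke the concentration bound of Theorem \ref{b1} applied to the augmented vector $(Y,X_j,\tilde{X}_j,Z)$, with a union bound over the $2d$ columns in place of $p$ and with $N$ replaced by $Kn_2$. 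Because the sample split makes the second-stage observations independent of $\hat{\mathcal{M}}_1$ and $\mathcal{E}$ (both measurable with respect to the first part), this conditioning does not affect the concentration, and the choice $\epsilon=c(Kn_2)^{-\tau}$ gives $P(G\mid\mathcal{E})\to 1$.

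On $G\cap\mathcal{E}$ the statistics split cleanly: by Lemma \ref{knockoff-lemma}(a), $\psi_j=0$ for every $j\in\mathcal{M}^c$, so $|\hat{\psi}_j|\le\epsilon$ for all unimportant features retained in $\hat{\mathcal{M}}_1$, whereas the signal condition $\min_{j\in\mathcal{M}}\psi_j\ge 4\epsilon$ forces $\hat{\psi}_j\ge 3\epsilon$ for every $j\in\mathcal{M}$. Hence no feature satisfies $\hat{\psi}_j\le -t$ once $t\ge 3\epsilon$, and no unimportant feature reaches the level $3\epsilon$. Let $t^\ast=\min_{j\in\mathcal{M}}\hat{\psi}_j$, which lies in $\mathcal{W}$ and satisfies $t^\ast\ge 3\epsilon$, and consider the ratio $R(t)=(1+\#\{j:\hat{\psi}_j\le -t\})/\#\{j:\hat{\psi}_j\ge t\}$. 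At $t=t^\ast$ the numerator is $1$ and the denominator is exactly $|\mathcal{M}|$, so $R(t^\ast)=1/|\mathcal{M}|$. For part (a), $\alpha\ge 1/|\mathcal{M}|$ makes $t^\ast$ feasible, whence $T_\alpha\le t^\ast$ and $\mathcal{M}\subseteq\hat{\mathcal{M}}(T_\alpha)$. For part (b), every candidate $t\ge 3\epsilon$ likewise has numerator $1$ and denominator at most $|\mathcal{M}|$, so $R(t)\ge 1/|\mathcal{M}|>\alpha$; no large threshold is feasible and the feasible set lies in $\{t\in\mathcal{W}:t\le\epsilon\}$. If that set is non-empty then $T_\alpha\le\epsilon<t^\ast$, giving $\mathcal{M}\subseteq\hat{\mathcal{M}}(T_\alpha)$; if it is empty then $T_\alpha=+\infty$ by convention and $\hat{\mathcal{M}}(T_\alpha)=\emptyset$. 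Thus $\{\mathcal{M}\subseteq\hat{\mathcal{M}}(T_\alpha)\}\cup\{\hat{\mathcal{M}}(T_\alpha)=\emptyset\}$ holds on $G\cap\mathcal{E}$, and combining with $P(G\mid\mathcal{E})\to 1$ yields both claims.

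The main obstacle I anticipate is the concentration step for the knockoff-augmented statistics, not the threshold argument. The delicacy is that the columns $\tilde{\mathcal{X}}_j$ are themselves generated from the second-stage design, so to reuse Theorem \ref{b1} one must verify that the augmented vector $(Y,X_j,\tilde{X}_j,Z)$ still obeys the sub-exponential moment bound of Condition \ref{a1}. Under Condition \ref{a5} together with the sub-Gaussian assumption this holds, since the conditional law of $\tilde{X}$ given $X$ preserves the tail behaviour; Theorem \ref{b1} then transfers verbatim with $N$ replaced by $Kn_2$ and $p$ by $2d$. Once $G$ is secured, the remaining separation-and-threshold reasoning is entirely deterministic and elementary.
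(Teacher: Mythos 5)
Your proposal is correct and follows essentially the same route as the paper's own proof: concentration of $\hat{\psi}_j$ on the second-stage data via the bound of Theorem \ref{b1} with $N$ replaced by $Kn_2$, the fact from Lemma \ref{knockoff-lemma}(a) that $\psi_j=0$ for $j\in\mathcal{M}^c$ to separate signal from noise statistics, and the deterministic threshold computation at $t_0=\min_{j\in\mathcal{M}}\hat{\psi}_j$ where the ratio equals $1/|\mathcal{M}|$, yielding parts (a) and (b). If anything, your write-up is slightly more careful than the paper's in two places: you spell out the empty-feasible-set convention in part (b), and you explicitly flag that the concentration bound must be justified for the knockoff-augmented statistics (i.e., that Condition \ref{a1} holds for $(Y,X_j,\tilde{X}_j,Z)$), a point the paper's proof passes over silently.
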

Theorem \ref{knock-th2} shows that under the condition $\alpha \geq 1/|\mathcal{M}|$, the screening methods based on knockoff enjoy sure screening property and control the FDR simultaneously with high probability. However, when $\alpha < 1/|\mathcal{M}|$, the sure screening property cannot be guaranteed, which is verified by our simulation results as well.

\section{Numerical Simulation}
In this section, numerical simulations are carried out to show the performance of the proposed methods in finite samples. We use SAPS, ACPS and JDPS to represent the three different distributed feature screening methods in Section 2; and use SAPS-Kn, ACPS-Kn and JDPS-Kn to represent the three distributed feature screening methods based on knockoff features for FDR control in Section 4. In each example, we repeat the procedure 200 times and set $\Sigma=(\sigma_{ij})_{p \times p}$, $\sigma_{ij}=0.5^{|i-j|}$.

\begin{table}[htbp]
    \centering
    \caption{Simulation results of Model (a) in Example 1 under the setting of $c=0.02$. }\label{tab1}
   \resizebox{.8\columnwidth}{!}{
    \begin{tabular*}{\hsize}{@{}@{\extracolsep{\fill}}cccccccccccc@{}}
    \hline
        $Z$ & ($N$,$p$) & $K$ & method & 5\% & 50\% & 95\% & AUC & SSR & PSR & FDR & Time \\ \hline
        $X_1$ & 10000 & 1 & Global & 7 & 10 & 100 & 0.9987 & 1 & 1 & 0.99 & 8.57 \\ 
        ~ & 3000 & 20 & SAPS & 7 & 10 & 109 & 0.9986 & 1 & 1 & 0.99 & 0.85 \\ 
        ~ & ~ & ~ & ACPS & 7 & 10 & 100 & 0.9987 & 1 & 1 & 0.99 & 3.21 \\ 
        ~ & ~ & ~ & JDPS & 7 & 10 & 109 & 0.9986 & 1 & 1 & 0.99 & 14.46 \\ 
        ~ & ~ & 50 & SAPS & 7 & 11 & 121 & 0.9985 & 1 & 1 & 0.99 & 0.55 \\ 
        ~ & ~ & ~ & ACPS & 7 & 10 & 100 & 0.9987 & 1 & 1 & 0.99 & 1.88 \\ 
        ~ & ~ & ~ & JDPS & 7 & 10 & 126 & 0.9985 & 1 & 1 & 0.99 & 24.00 \\ 
        ~ & 20000 & 1 & Global & 7 & 7 & 9 & 0.9999 & 1 & 1 & 0.99 & 9.89 \\ 
        ~ & 3000 & 20 & SAPS & 7 & 7 & 9 & 0.9999 & 1 & 1 & 0.99 & 0.82 \\ 
        ~ & ~ & ~ & ACPS & 7 & 7 & 9 & 0.9999 & 1 & 1 & 0.99 & 3.10 \\ 
        ~ & ~ & ~ & JDPS & 7 & 7 & 9 & 0.9999 & 1 & 1 & 0.99 & 7.39 \\ 
        ~ & ~ & 50 & SAPS & 7 & 7 & 9 & 0.9999 & 1 & 1 & 0.99 & 0.50 \\ 
        ~ & ~ & ~ & ACPS & 7 & 7 & 9 & 0.9999 & 1 & 1 & 0.99 & 2.09 \\ 
        ~ & ~ & ~ & JDPS & 7 & 7 & 9 & 0.9999 & 1 & 1 & 0.99 & 13.63 \\ 
        ~ & 20000 & 1 & Global & 7 & 7 & 9 & 0.9999 & 1 & 1 & 0.99 & 19.01 \\ 
        ~ & 6000 & 20 & SAPS & 7 & 7 & 10 & 0.9999 & 1 & 1 & 0.99 & 1.45 \\ 
        ~ & ~ & ~ & ACPS & 7 & 7 & 9 & 0.9999 & 1 & 1 & 0.99 & 4.74 \\ 
        ~ & ~ & ~ & JDPS & 7 & 7 & 10 & 0.9999 & 1 & 1 & 0.99 & 15.82 \\ 
        ~ & ~ & 50 & SAPS & 7 & 7 & 10 & 0.9999 & 1 & 1 & 0.99 & 1.71 \\ 
        ~ & ~ & ~ & ACPS & 7 & 7 & 9 & 0.9999 & 1 & 1 & 0.99 & 3.07 \\ 
        ~ & ~ & ~ & JDPS & 7 & 7 & 10 & 0.9999 & 1 & 1 & 0.99 & 19.08 \\ \hline
        $X_2$ & 10000 & 1 & Global & 14 & 315 & 2457 & 0.9721 & 0.77 & 0.97 & 0.99 & 8.53 \\ 
        ~ & 3000 & 20 & SAPS & 14 & 307 & 2506 & 0.9717 & 0.75 & 0.97 & 0.99 & 0.81 \\ 
        ~ & ~ & ~ & ACPS & 14 & 315 & 2457 & 0.9721 & 0.77 & 0.97 & 0.99 & 2.43 \\ 
        ~ & ~ & ~ & JDPS & 14 & 308 & 2498 & 0.9717 & 0.75 & 0.97 & 0.99 & 14.29 \\ 
        ~ & ~ & 50 & SAPS & 17 & 307 & 2504 & 0.9717 & 0.77 & 0.97 & 0.99 & 0.42 \\ 
        ~ & ~ & ~ & ACPS & 14 & 315 & 2457 & 0.9721 & 0.77 & 0.97 & 0.99 & 1.65 \\ 
        ~ & ~ & ~ & JDPS & 17 & 307 & 2464 & 0.9717 & 0.77 & 0.97 & 0.99 & 18.92 \\ 
        ~ & 20000 & 1 & Global & 8 & 42 & 1137 & 0.9913 & 0.98 & 0.99 & 0.99 & 10.12 \\ 
        ~ & 3000 & 20 & SAPS & 8 & 42 & 1131 & 0.9913 & 0.98 & 0.99 & 0.99 & 0.76 \\ 
        ~ & ~ & ~ & ACPS & 8 & 42 & 1137 & 0.9913 & 0.98 & 0.99 & 0.99 & 2.94 \\ 
        ~ & ~ & ~ & JDPS & 8 & 42 & 1128 & 0.9913 & 0.98 & 0.99 & 0.99 & 8.07 \\ 
        ~ & ~ & 50 & SAPS & 8 & 43 & 1109 & 0.9912 & 0.98 & 0.99 & 0.99 & 0.64 \\ 
        ~ & ~ & ~ & ACPS & 8 & 42 & 1137 & 0.9913 & 0.98 & 0.99 & 0.99 & 2.34 \\ 
        ~ & ~ & ~ & JDPS & 8 & 43 & 1091 & 0.9912 & 0.98 & 0.99 & 0.99 & 11.68 \\ 
        ~ & 20000 & 1 & Global & 9 & 87 & 2973 & 0.9889 & 0.92 & 0.99 & 0.99 & 19.56 \\ 
        ~ & 6000 & 20 & SAPS & 9 & 83 & 3029 & 0.9889 & 0.92 & 0.99 & 0.99 & 1.81 \\ 
        ~ & ~ & ~ & ACPS & 9 & 87 & 2973 & 0.9889 & 0.92 & 0.99 & 0.99 & 4.59 \\ 
        ~ & ~ & ~ & JDPS & 9 & 83 & 3012 & 0.9889 & 0.92 & 0.99 & 0.99 & 15.75 \\ 
        ~ & ~ & 50 & SAPS & 9 & 98 & 2920 & 0.9887 & 0.92 & 0.99 & 0.99 & 1.50 \\ 
        ~ & ~ & ~ & ACPS & 9 & 87 & 2973 & 0.9889 & 0.92 & 0.99 & 0.99 & 3.22 \\ 
        ~ & ~ & ~ & JDPS & 9 & 98 & 2939 & 0.9887 & 0.92 & 0.99 & 0.99 & 19.66 \\ \hline
    \end{tabular*}
    }
\end{table}

\begin{table}[htbp]
    \centering
    \caption{Simulation results of Model (b) in Example 1 under the setting of $c=0.03$. }\label{tab2}
    \resizebox{.8\columnwidth}{!}{
    \begin{tabular*}{\hsize}{@{}@{\extracolsep{\fill}}cccccccccccc@{}}
    \hline
        $Z$ & ($N$,$p$) & $K$ & method & 5\% & 50\% & 95\% & AUC & SSR & PSR & FDR & Time \\ \hline
        $X_5$ & 10000 & 1 & Global & 4 & 4 & 4 & 1 & 1 & 1 & 0.99 & 4.48 \\ 
        ~ & 3000 & 20 & SAPS & 4 & 4 & 4 & 1 & 1 & 1 & 0.99 & 0.48 \\ 
        ~ & ~ & ~ & ACPS & 4 & 4 & 4 & 1 & 1 & 1 & 0.99 & 1.82 \\ 
        ~ & ~ & ~ & JDPS & 4 & 4 & 4 & 1 & 1 & 1 & 0.99 & 12.70 \\ 
        ~ & ~ & 50 & SAPS & 4 & 4 & 4 & 1 & 1 & 1 & 0.99 & 0.40 \\ 
        ~ & ~ & ~ & ACPS & 4 & 4 & 4 & 1 & 1 & 1 & 0.99 & 1.31 \\ 
        ~ & ~ & ~ & JDPS & 4 & 4 & 4 & 1 & 1 & 1 & 0.99 & 16.87 \\ 
        ~ & 20000 & 1 & Global & 4 & 4 & 4 & 1 & 1 & 1 & 0.99 & 9.53 \\ 
        ~ & 3000 & 20 & SAPS & 4 & 4 & 4 & 1 & 1 & 1 & 0.99 & 0.70 \\ 
        ~ & ~ & ~ & ACPS & 4 & 4 & 4 & 1 & 1 & 1 & 0.99 & 2.32 \\ 
        ~ & ~ & ~ & JDPS & 4 & 4 & 4 & 1 & 1 & 1 & 0.99 & 8.42 \\ 
        ~ & ~ & 50 & SAPS & 4 & 4 & 4 & 1 & 1 & 1 & 0.99 & 0.51 \\ 
        ~ & ~ & ~ & ACPS & 4 & 4 & 4 & 1 & 1 & 1 & 0.99 & 1.86 \\ 
        ~ & ~ & ~ & JDPS & 4 & 4 & 4 & 1 & 1 & 1 & 0.99 & 9.37 \\ 
        ~ & 20000 & 1 & Global & 4 & 4 & 4 & 1 & 1 & 1 & 0.99 & 18.47 \\ 
        ~ & 6000 & 20 & SAPS & 4 & 4 & 4 & 1 & 1 & 1 & 0.99 & 1.41 \\ 
        ~ & ~ & ~ & ACPS & 4 & 4 & 4 & 1 & 1 & 1 & 0.99 & 4.91 \\ 
        ~ & ~ & ~ & JDPS & 4 & 4 & 4 & 1 & 1 & 1 & 0.99 & 15.19 \\ 
        ~ & ~ & 50 & SAPS & 4 & 4 & 4 & 1 & 1 & 1 & 0.99 & 0.91 \\ 
        ~ & ~ & ~ & ACPS & 4 & 4 & 4 & 1 & 1 & 1 & 0.99 & 3.46 \\ 
        ~ & ~ & ~ & JDPS & 4 & 4 & 4 & 1 & 1 & 1 & 0.99 & 9.94 \\ \hline
        $X_6$ & 10000 & 1 & Global & 5 & 5 & 9 & 0.99 & 1 & 1 & 0.99 & 5.13 \\ 
        ~ & 3000 & 20 & SAPS & 5 & 5 & 11 & 0.99 & 1 & 1 & 0.99 & 0.41 \\ 
        ~ & ~ & ~ & ACPS & 5 & 5 & 9 & 0.99 & 1 & 1 & 0.99 & 1.76 \\ 
        ~ & ~ & ~ & JDPS & 5 & 5 & 11 & 0.99 & 1 & 1 & 0.99 & 12.48 \\ 
        ~ & ~ & 50 & SAPS & 5 & 5 & 9 & 0.99 & 1 & 1 & 0.99 & 0.25 \\ 
        ~ & ~ & ~ & ACPS & 5 & 5 & 9 & 0.99 & 1 & 1 & 0.99 & 1.15 \\ 
        ~ & ~ & ~ & JDPS & 5 & 5 & 10 & 0.99 & 1 & 1 & 0.99 & 12.30 \\ 
        ~ & 20000 & 1 & Global & 5 & 5 & 5 & 1 & 1 & 1 & 0.99 & 10.96 \\ 
        ~ & 3000 & 20 & SAPS & 5 & 5 & 5 & 1 & 1 & 1 & 0.99 & 0.79 \\ 
        ~ & ~ & ~ & ACPS & 5 & 5 & 5 & 1 & 1 & 1 & 0.99 & 2.82 \\ 
        ~ & ~ & ~ & JDPS & 5 & 5 & 5 & 1 & 1 & 1 & 0.99 & 8.55 \\ 
        ~ & ~ & 50 & SAPS & 5 & 5 & 5 & 1 & 1 & 1 & 0.99 & 0.76 \\ 
        ~ & ~ & ~ & ACPS & 5 & 5 & 5 & 1 & 1 & 1 & 0.99 & 2.13 \\ 
        ~ & ~ & ~ & JDPS & 5 & 5 & 5 & 1 & 1 & 1 & 0.99 & 9.05 \\ 
        ~ & 20000 & 1 & Global & 5 & 5 & 5 & 1 & 1 & 1 & 0.99 & 18.53 \\ 
        ~ & 6000 & 20 & SAPS & 5 & 5 & 5 & 1 & 1 & 1 & 0.99 & 1.31 \\ 
        ~ & ~ & ~ & ACPS & 5 & 5 & 5 & 1 & 1 & 1 & 0.99 & 4.76 \\ 
        ~ & ~ & ~ & JDPS & 5 & 5 & 5 & 1 & 1 & 1 & 0.99 & 14.90 \\ 
        ~ & ~ & 50 & SAPS & 5 & 5 & 5 & 1 & 1 & 1 & 0.99 & 0.89 \\ 
        ~ & ~ & ~ & ACPS & 5 & 5 & 5 & 1 & 1 & 1 & 0.99 & 3.46 \\ 
        ~ & ~ & ~ & JDPS & 5 & 5 & 5 & 1 & 1 & 1 & 0.99 & 9.77 \\ \hline
    \end{tabular*}
    }
\end{table}

We evaluate the performance of proposed methods in terms of successful screening rate (SSR), positive selection rate (PSR), false discovery rate (FDR) and AUC measure, which are calculated as follows:
$$ SSR=\frac{1}{T}\sum_{t=1}^{T}I\{ \mathcal{M} \subseteq \hat{\mathcal{M}}(t) \} ,\quad PSR=\frac{| \mathcal{M} \cap \hat{\mathcal{M}}(t) |}{| \mathcal{M} |},\quad FDR=\frac{| \hat{\mathcal{M}}(t)-\mathcal{M} |}{|  \hat{\mathcal{M}}(t) |} ,$$
$$ AUC=1-   \frac{1}{|\mathcal{M}||\mathcal{M}^c|} \sum_{i \in \mathcal{M}} \sum_{j \in \mathcal{M}^c} \left[I\{ \hat{\omega}_i < \hat{\omega}_j \} + 0.5I\{ \hat{\omega}_i = \hat{\omega}_j \} \right], $$
where $T$ denotes the number of repeated simulations, $\hat{\mathcal{M}}(t)$ denotes the index set of the features retained after screening at the $t$-th repetition based on a hard threshold. In other words, we retain the features whose $\hat{\omega}_j$ are among the largest $\lfloor n/\log(n) \rfloor$. The AUC was used in \citet{r10}, and has been widely used to evaluate the ranking quality (e.g., \citet{r1}). In each repetition, we record the minimum model size (MS) that contains all important features. We report the $5\%,50\%$ and $95\%$ quantiles of the minimum model size, and the means of PSR, FDR, AUC and the CPU time (in seconds).

\subsection{Example 1}
We generate $N$ independent copies $\left\{(Y_i,\mathbf{X}_i,Z_i) \right\}^{N}_{i=1}$, where $\mathbf{X}_i=(X_{i1},\ldots,X_{ip})^{T} \sim N(0,\Sigma)$. The corresponding response $Y$ is generated based on the following models.
\begin{align*}
\text{(a)}~~ Y&=c(X_1+X_3+X_4+X_5+X_6+X_7+X_8+X_9)+\epsilon , \\
\text{(b)}~~ Y&=c(2X_1+3X_2+1.5X_3+2X_4+2\sin(2\pi X_5))+\epsilon,
\end{align*}
where Model (a) is adapted from Model 7 of \citet{r11}, and Model (b) is used to show the performance of the screening methods when a nonlinear term exists. The error terms in the two models satisfy $\epsilon \sim N(0,1)$. In Model (a), we consider two cases where the conditional feature is taken as $Z=X_1$ (important feature) and $Z=X_2$ (unimportant feature). In Model (b), we consider two cases where $Z=X_5$ and $Z=X_6$.

Table \ref{tab1} shows the simulation results in the case of $c=0.02$ in Model (a). From the quantiles of the minimum model size and AUC in Table \ref{tab1}, it can be seen that the performance of the three distributed feature screening methods improves with the increase of the sample size, but decreases with the increase of the dimension. Besides, the selection of conditional feature $Z$ has a greater impact on the results. When $Z$ is set as an important feature, the overall performance is better. Meanwhile, when $Z$ is set as an unimportant feature, the performance will get worse, but even so, the performance of the distributed algorithm is still very close to the results based on full data. In general, the ACPS method outperforms other methods, and the JDPS and SAPS methods perform slightly worse. In addition, in terms of the SSR, PSR and FDR, although the hard threshold-based screening methods may lead to the model with a high probability of selecting all important features, it also introduces a large number of noisy features.
Table \ref{tab2} shows the simulation results in the case of $c=0.03$ in Model (b). A similar observation can be drawn, 

In addition, as shown in the last column of Tables \ref{tab1} and \ref{tab2}, in most cases, the distributed methods reduce the computing time, especially for the SAPS and ACPS methods. However, in some cases, the running time of JDPS method does not show a significant reduction. The reason is probably that the JDPS method is proposed based on jackknife, which is a time-consuming calculation method when the sample size is large.

\subsection{Example 2}

\begin{sidewaystable}[htbp]
    \centering
    \caption{Simulation results of Model (a) in Example 2 under the setting of $(N,p,c,Z)=(10000,5000, 0.725,X_1)$. } \label{tab3}
   \resizebox{.8\columnwidth}{!}{
    \begin{tabular}{cccccccccccccc}
    \hline
        $K$ & $\alpha$ & method & $\mathcal{P}_3$ & $\mathcal{P}_4$ & $\mathcal{P}_5$ & $\mathcal{P}_6$ & $\mathcal{P}_7$ & $\mathcal{P}_8$ & $\mathcal{P}_9$ & MS & SSR & FDR & Time \\ \hline
        20 & 0.1 & SAPS-Kn & 0.37 & 0.26 & 0.22 & 0.24 & 0.24 & 0.24 & 0.25 & 2 & 0.14 & 0.049 & 5.05 \\ 
        ~ & ~ & ACPS-Kn & 0.40 & 0.22 & 0.25 & 0.23 & 0.25 & 0.22 & 0.23 & 2 & 0.13 & 0.049 & 6.12 \\ 
        ~ & ~ & JDPS-Kn & 0.36 & 0.27 & 0.23 & 0.25 & 0.19 & 0.23 & 0.24 & 2 & 0.13 & 0.048 & 16.40 \\ 
        ~ & 0.2 & SAPS-Kn & 1 & 1 & 1 & 1 & 1 & 1 & 1 & 9 & 1 & 0.167 & 4.57 \\ 
        ~ & ~ & ACPS-Kn & 1 & 1 & 1 & 1 & 1 & 1 & 1 & 9 & 1 & 0.166 & 5.90 \\ 
        ~ & ~ & JDPS-Kn & 1 & 1 & 1 & 1 & 1 & 1 & 1 & 9 & 1 & 0.176 & 16.08 \\ 
        ~ & 0.3 & SAPS-Kn & 1 & 1 & 1 & 1 & 1 & 1 & 1 & 11 & 1 & 0.273 & 4.42 \\ 
        ~ & ~ & ACPS-Kn & 1 & 1 & 1 & 1 & 1 & 1 & 1 & 11 & 1 & 0.275 & 5.62 \\ 
        ~ & ~ & JDPS-Kn & 1 & 1 & 1 & 1 & 1 & 1 & 1 & 11 & 1 & 0.279 & 15.84 \\ \hline
        50 & 0.1 & SAPS-Kn & 0.37 & 0.32 & 0.29 & 0.26 & 0.25 & 0.25 & 0.25 & 3 & 0.16 & 0.054 & 0.65 \\ 
        ~ & ~ & ACPS-Kn & 0.31 & 0.21 & 0.24 & 0.22 & 0.20 & 0.20 & 0.24 & 2 & 0.10 & 0.035 & 1.66 \\ 
        ~ & ~ & JDPS-Kn & 0.40 & 0.27 & 0.25 & 0.23 & 0.26 & 0.23 & 0.26 & 2 & 0.15 & 0.050 & 1.61 \\ 
        ~ & 0.2 & SAPS-Kn & 1 & 1 & 1 & 1 & 1 & 1 & 1 & 9 & 1 & 0.166 & 0.61 \\ 
        ~ & ~ & ACPS-Kn & 1 & 1 & 1 & 1 & 1 & 1 & 1 & 9 & 1 & 0.146 & 1.60 \\ 
        ~ & ~ & JDPS-Kn & 1 & 1 & 1 & 1 & 1 & 1 & 1 & 9 & 1 & 0.190 & 1.59 \\ 
        ~ & 0.3 & SAPS-Kn & 1 & 1 & 1 & 1 & 1 & 1 & 1 & 11 & 1 & 0.282 & 0.63 \\ 
        ~ & ~ & ACPS-Kn & 1 & 1 & 1 & 1 & 1 & 1 & 1 & 11 & 1 & 0.275 & 1.64 \\ 
        ~ & ~ & JDPS-Kn & 1 & 1 & 1 & 1 & 1 & 1 & 1 & 11 & 1 & 0.285 & 1.64 \\ \hline
    \end{tabular}
    }
\end{sidewaystable}

\begin{sidewaystable}[htbp]
    \centering
    \caption{Simulation results of Model (a) in Example 2 under the setting of $(N,p,c,Z)=(10000,5000, 0.725,X_2)$. }  \label{tab4}
   \resizebox{.8\columnwidth}{!}{
\begin{tabular}{ccccccccccccccc}
    \hline
        $K$ & $\alpha$ & method & $\mathcal{P}_1$ & $\mathcal{P}_3$ & $\mathcal{P}_4$ & $\mathcal{P}_5$ & $\mathcal{P}_6$ & $\mathcal{P}_7$ & $\mathcal{P}_8$ & $\mathcal{P}_9$ & MS & SSR & FDR & Time \\ \hline
        20 & 0.1 & SAPS-Kn & 0.22 & 0.22 & 0.44 & 0.34 & 0.35 & 0.33 & 0.32 & 0.33 & 3 & 0.22 & 0.056 & 4.41 \\ 
        ~ & ~ & ACPS-Kn & 0.21 & 0.21 & 0.41 & 0.35 & 0.31 & 0.32 & 0.33 & 0.32 & 3 & 0.21 & 0.061 & 5.62 \\ 
        ~ & ~ & JDPS-Kn & 0.29 & 0.29 & 0.47 & 0.42 & 0.42 & 0.37 & 0.36 & 0.42 & 4 & 0.29 & 0.075 & 15.88 \\ 
        ~ & 0.2 & SAPS-Kn & 1 & 1 & 1 & 1 & 1 & 1 & 1 & 1 & 10 & 1 & 0.159 & 4.43 \\ 
        ~ & ~ & ACPS-Kn & 1 & 1 & 1 & 1 & 1 & 1 & 1 & 1 & 10 & 1 & 0.158 & 5.61 \\ 
        ~ & ~ & JDPS-Kn & 1 & 1 & 1 & 1 & 1 & 1 & 1 & 1 & 10 & 1 & 0.176 & 15.94 \\ 
        ~ & 0.3 & SAPS-Kn & 1 & 1 & 1 & 1 & 1 & 1 & 1 & 1 & 12 & 1 & 0.251 & 4.40 \\ 
        ~ & ~ & ACPS-Kn & 1 & 1 & 1 & 1 & 1 & 1 & 1 & 1 & 12 & 1 & 0.248 & 5.66 \\ 
        ~ & ~ & JDPS-Kn & 1 & 1 & 1 & 1 & 1 & 1 & 1 & 1 & 13 & 1 & 0.289 & 15.89 \\ \hline
        50 & 0.1 & SAPS-Kn & 0.26 & 0.26 & 0.44 & 0.40 & 0.36 & 0.40 & 0.34 & 0.35 & 4 & 0.26 & 0.065 & 0.65 \\ 
        ~ & ~ & ACPS-Kn & 0.26 & 0.26 & 0.42 & 0.42 & 0.35 & 0.36 & 0.41 & 0.35 & 4 & 0.26 & 0.065 & 1.61 \\ 
        ~ & ~ & JDPS-Kn & 0.29 & 0.29 & 0.43 & 0.41 & 0.43 & 0.41 & 0.40 & 0.36 & 4 & 0.29 & 0.077 & 1.61 \\ 
        ~ & 0.2 & SAPS-Kn & 1 & 1 & 1 & 1 & 1 & 1 & 1 & 1 & 10 & 1 & 0.177 & 0.62 \\ 
        ~ & ~ & ACPS-Kn & 1 & 1 & 1 & 1 & 1 & 1 & 1 & 1 & 10 & 1 & 0.175 & 1.57 \\ 
        ~ & ~ & JDPS-Kn & 1 & 1 & 1 & 1 & 1 & 1 & 1 & 1 & 11 & 1 & 0.192 & 1.54 \\ 
        ~ & 0.3 & SAPS-Kn & 1 & 1 & 1 & 1 & 1 & 1 & 1 & 1 & 12 & 1 & 0.280 & 0.62 \\ 
        ~ & ~ & ACPS-Kn & 1 & 1 & 1 & 1 & 1 & 1 & 1 & 1 & 12 & 1 & 0.267 & 1.58 \\ 
        ~ & ~ & JDPS-Kn & 1 & 1 & 1 & 1 & 1 & 1 & 1 & 1 & 13 & 1 & 0.297 & 1.52 \\ \hline
    \end{tabular}
     }
\end{sidewaystable}

\begin{table}[htbp]
    \centering
    \caption{Simulation results of Model (b) in Example 2 under the setting of $(N,p,c)=(10000,10000, 0.499)$. }  \label{tab5}
   \resizebox{.8\columnwidth}{!}{
\begin{tabular}{ccccccccccccc}
    \hline
        $Z$ & $K$ & $\alpha$ & method & $\mathcal{P}_1$ & $\mathcal{P}_2$ & $\mathcal{P}_3$ & $\mathcal{P}_4$ & $\mathcal{P}_5$ & MS & SSR & FDR & Time \\ \hline
        $X_5$ & 20 & 0.1 & SAPS-Kn & 0.01 & 1 & 0.01 & 0.01 & - & 1 & 0.01 & 0.003 & 3.66 \\ 
        ~ & ~ & ~ & ACPS-Kn & 0.01 & 1 & 0.01 & 0.01 & - & 1 & 0.01 & 0.004 & 4.94 \\ 
        ~ & ~ & ~ & JDPS-Kn & 0.02 & 1 & 0.02 & 0.02 & - & 1 & 0.02 & 0.013 & 25.14 \\ 
        ~ & ~ & 0.2 & SAPS-Kn & 0.48 & 1 & 0.48 & 0.48 & - & 4 & 0.48 & 0.152 & 3.20 \\ 
        ~ & ~ & ~ & ACPS-Kn & 0.48 & 1 & 0.48 & 0.48 & - & 4 & 0.48 & 0.161 & 4.66 \\ 
        ~ & ~ & ~ & JDPS-Kn & 0.58 & 1 & 0.58 & 0.58 & - & 4 & 0.58 & 0.197 & 24.73 \\ 
        ~ & ~ & 0.3 & SAPS-Kn & 1 & 1 & 1 & 1 & - & 6 & 1 & 0.218 & 3.10 \\ 
        ~ & ~ & ~ & ACPS-Kn & 1 & 1 & 1 & 1 & - & 6 & 1 & 0.245 & 4.58 \\ 
        ~ & ~ & ~ & JDPS-Kn & 1 & 1 & 1 & 1 & - & 7 & 1 & 0.289 & 24.23 \\ 
        ~ & 50 & 0.1 & SAPS-Kn & 0.02 & 1 & 0.02 & 0.02 & - & 1 & 0.02 & 0.009 & 0.62 \\ 
        ~ & ~ & ~ & ACPS-Kn & 0.01 & 1 & 0.01 & 0.01 & - & 1 & 0.01 & 0.003 & 1.94 \\ 
        ~ & ~ & ~ & JDPS-Kn & 0.01 & 1 & 0.01 & 0.01 & - & 1 & 0.01 & 0.006 & 2.31 \\ 
        ~ & ~ & 0.2 & SAPS-Kn & 0.47 & 1 & 0.47 & 0.47 & - & 4 & 0.47 & 0.153 & 0.51 \\ 
        ~ & ~ & ~ & ACPS-Kn & 0.55 & 1 & 0.55 & 0.55 & - & 4 & 0.55 & 0.170 & 1.83 \\ 
        ~ & ~ & ~ & JDPS-Kn & 0.59 & 1 & 0.59 & 0.59 & - & 4 & 0.59 & 0.186 & 2.23 \\ 
        ~ & ~ & 0.3 & SAPS-Kn & 1 & 1 & 1 & 1 & - & 6 & 1 & 0.234 & 0.52 \\ 
        ~ & ~ & ~ & ACPS-Kn & 1 & 1 & 1 & 1 & - & 6 & 1 & 0.231 & 1.77 \\ 
        ~ & ~ & ~ & JDPS-Kn & 1 & 1 & 1 & 1 & - & 7 & 1 & 0.269 & 2.26 \\ \hline 
        $X_6$ & 20 & 0.1 & SAPS-Kn & 0.02 & 1 & 0.02 & 0.02 & 0 & 1 & 0 & 0.009 & 2.92 \\ 
        ~ & ~ & ~ & ACPS-Kn & 0.01 & 1 & 0.01 & 0.01 & 0 & 1 & 0 & 0.003 & 4.49 \\ 
        ~ & ~ & ~ & JDPS-Kn & 0.02 & 1 & 0.02 & 0.02 & 0 & 1 & 0 & 0.010 & 23.93 \\ 
        ~ & ~ & 0.2 & SAPS-Kn & 0.50 & 1 & 0.50 & 0.50 & 0.01 & 4 & 0.01 & 0.149 & 3.06 \\ 
        ~ & ~ & ~ & ACPS-Kn & 0.50 & 1 & 0.50 & 0.50 & 0.01 & 4 & 0.01 & 0.155 & 4.44 \\ 
        ~ & ~ & ~ & JDPS-Kn & 0.52 & 1 & 0.52 & 0.52 & 0 & 4 & 0 & 0.158 & 23.55 \\ 
        ~ & ~ & 0.3 & SAPS-Kn & 1 & 1 & 1 & 1 & 0.02 & 6 & 0.02 & 0.221 & 2.86 \\ 
        ~ & ~ & ~ & ACPS-Kn & 1 & 1 & 1 & 1 & 0.01 & 6 & 0.01 & 0.236 & 4.15 \\ 
        ~ & ~ & ~ & JDPS-Kn & 1 & 1 & 1 & 1 & 0 & 6 & 0 & 0.244 & 21.85 \\ 
        ~ & 50 & 0.1 & SAPS-Kn & 0.01 & 1 & 0.01 & 0.01 & 0 & 1 & 0 & 0.006 & 0.49 \\ 
        ~ & ~ & ~ & ACPS-Kn & 0.01 & 1 & 0.01 & 0.01 & 0 & 1 & 0 & 0.006 & 1.60 \\ 
        ~ & ~ & ~ & JDPS-Kn & 0.01 & 1 & 0.01 & 0.01 & 0 & 1 & 0 & 0.003 & 1.90 \\ 
        ~ & ~ & 0.2 & SAPS-Kn & 0.48 & 1 & 0.48 & 0.48 & 0.01 & 4 & 0.01 & 0.172 & 0.44 \\ 
        ~ & ~ & ~ & ACPS-Kn & 0.53 & 1 & 0.53 & 0.53 & 0.01 & 4 & 0.01 & 0.159 & 1.58 \\ 
        ~ & ~ & ~ & JDPS-Kn & 0.54 & 1 & 0.54 & 0.54 & 0.01 & 4 & 0.01 & 0.178 & 1.82 \\ 
        ~ & ~ & 0.3 & SAPS-Kn & 1 & 1 & 1 & 1 & 0.02 & 6 & 0.02 & 0.247 & 0.42 \\ 
        ~ & ~ & ~ & ACPS-Kn & 1 & 1 & 1 & 1 & 0.02 & 6 & 0.02 & 0.231 & 1.40 \\ 
        ~ & ~ & ~ & JDPS-Kn & 1 & 1 & 1 & 1 & 0.02 & 7 & 0.02 & 0.279 & 1.78 \\ \hline
    \end{tabular}
    }
\end{table}

Similar to Example 1, we evaluate the performance of three distributed feature screening methods based on knockoff features under Models (a) and (b). Tables \ref{tab3} and \ref{tab4} show the simulation results of the three methods in Model (a) under the setting of $(N,p,c)=(10000,5000, 0.725)$, where $\mathcal{P}_j$ denotes the proportion of the $j$th feature being selected over 200 replications. It can be seen that the three methods can successfully control the FDR under the pre-given level. Note that in Model (a), we have $0.1 < 1/|\mathcal{M}| <  0.2$. At the level of $\alpha=0.1$, the three methods sacrifice the sure screening property to control the FDR. At other levels of $\alpha$, the three methods guarantee the sure screening property and control the FDR simultaneously. Thus the simulation results are consistent with the conclusion of Theorem \ref{knock-th2}.

Table \ref{tab5} shows the results of the three methods in the setting of $(N,p,c)=(10000,10000,0.499 )$ in Model (b). When $Z=X_5$, $1/|\mathcal{M}|= 0.25$. Thus, with $\alpha=0.2$, the proportion that each important feature is selected is low, indicating that the sure screening property cannot be satisfied, which verifies Theorem \ref{knock-th2}. In addition, it can be seen that when $Z=X_6$, the existence of a nonlinear term makes the effect of the three screening methods worse, but all the methods still control FDR.  The sure screening property is not shown at the three levels of $\alpha$ in terms of the selection probability. one main reason is that the model contains a nonlinear feature, and the Pearson correlation is hard to describe the nonlinear relationship between features. We will leave the nonlinear problem in the future research.

\section{Real Data Analysis}
We apply our methods to the data YearPredictionMSD, which is available from the UCI Machine Learning Repository \citep{r21}.  As shown in the simulations, the ACPS method performs most satisfactorily for massive data set.  Also, each feature has a large number of repeated values in this data set, which limits the use of SAPS and JDPS methods.  Thus, we use the ACPS method for illustration in this real data set. The data set contains 515,345 songs released from 1922 to 2011, and is divided into training set (the first 463,715 samples) and test set (the last 51,630 samples) by the data provider. Each record in the data set contains the song release year and 90 audio features, where the audio features are taken from the Echo Nest API. One of the goals to analyzing this data is to predict the song release year through the audio features. Before the formal analysis, we perform some preprocessing on the data, including the standardization of data features and removing some abnormal sample data. We finally retained 445,200 training samples. In order to apply the proposed distributed feature screening method, we consider the interaction effect of all audio features, and finally obtain 4,095 features. We firstly select the feature with the largest Pearson correlation as the conditional feature, and then employ the ACPS-Kn method to filter the remaining 4,094 features. We next use the selected features to fit a linear model on the training set, and evaluate the prediction effect of the selected features by calculating the square root of the mean square prediction error of the model on the test set ($RMSE=(\frac{1}{n}\sum(Y_i-\hat{Y}_i)^2)^{1/2}$). 

We partition the data according to the order of the original data set and consider two cases of $K=200$ and $K=600$. As a comparison, we also give the results of feature screening based on the hard threshold, which corresponds to the fourth column in Table \ref{tab6}. The associated results of feature screening under different settings are reported in Table \ref{tab6}.

\begin{table}[htbp]
    \centering
    \caption{Performance of the model selected by ACPS method on the test set. The numbers in the brackets denote the model size.  } \label{tab6}
    \begin{tabular}{cccc}
    \hline
        $K$ & $\alpha=0.2$ & $\alpha=0.3$ & $\lfloor n/\log(n) \rfloor$ \\ \hline
        200 & 0.9315(5) & 0.9076(12) & 0.8753(288) \\ 
        600 & 0.9006(15) & 0.8983(21) & 0.8820(112) \\ \hline
    \end{tabular}
\end{table}

It can be seen from Table \ref{tab6} that although the RMSE on the test set is slightly higher than that based on the hard threshold, the model size based on the ACPS-Kn method is much smaller than that based on the hard threshold. This indicates that the ACPS-Kn method can reduce the number of false discoveries such that the selected features have a competitive prediction power. 

\section{Concluding Remarks}
This paper mainly studies massive ultrahigh-dimensional data. Firstly, three different distributed feature screening methods based on partial correlation are proposed, and the corresponding theoretical properties are established. Secondly, in the distributed framework considered in this paper, the threshold of the feature screening method is studied based on FDR control and knockoff features. A distributed feature screening algorithm that can control FDR is proposed, and the theoretical properties of the algorithm are proved. The numerical simulation results show that the proposed methods perform well, can control the FDR level under a given level, and the aggregated distributed method has better performance. Moreover, two extensions would be worth being further studied. First, this paper only considers the case where the conditional feature is one-dimensional, but in practical applications, the case where the conditional feature is multi-dimensional may be more common. Extending this to multiple conditional variables would be of interest. Second, this paper is based on Pearson correlation for feature screening, but Pearson correlation is difficult to effectively describe some nonlinear problems. So extending the Pearson partial correlation to other robust partial correlation is also of interest.

\bigskip 
\noindent\textbf{Acknowledgement} This work was partly supported by National Natural Science Foundation of China (Grant Number 11801202) and Fundamental Research Funds for the Central Universities (Grant Number 2021CDJQY-047).



\section*{Appendix: Proofs of Theorems}

\begin{proof}[Proof of Lemma \ref{knockoff-lemma}]
For convenience, we use $(\mathbf{X},\tilde{\mathbf{X}})_{(S)}$ to denote $(\mathbf{X},\tilde{\mathbf{X}})_{swap(S)}$, which simplifies to $(\mathbf{X},\tilde{\mathbf{X}})_{(j)}$ when $S=\{ j \}$, let $F_{y|x}(v|u)$ be the conditional distribution of $y$ given $x=u$. For $j \in \mathcal{M}^c$, we have
\begin{equation}
    F_{(Y,Z)|(\mathbf{X},\tilde{\mathbf{X}})_{(j)}}(v|(u,\tilde{u}))=F_{(Y,Z)|(\mathbf{X},\tilde{\mathbf{X}})}(v|(u,\tilde{u})_{(j)})= F_{(Y,Z)|\mathbf{X}}(v|u') \label{s-1},
\end{equation}
where $u'$ is the first $p$ element of $(u,\tilde{u})_{(j)}$, the second equality is due to the definition of knockoff features. Note that $j \in \mathcal{M}^c$, then we have
\begin{equation}
    F_{(Y,Z)|\mathbf{X}}(v|u')=F_{(Y,Z)|\mathbf{X}_{(-j)}}(v|u'_{(-j)})=F_{(Y,Z)|\mathbf{X}}(v|u)=F_{(Y,Z)|(\mathbf{X},\tilde{\mathbf{X}})}(v|(u,\tilde{u})) \label{s-2}.
\end{equation}
According to equations \ref{s-1} and \ref{s-2}, we have
$$ F_{(Y,Z)|(\mathbf{X},\tilde{\mathbf{X}})_{(j)}}(v|(u,\tilde{u}))=F_{(Y,Z)|(\mathbf{X},\tilde{\mathbf{X}})}(v|(u,\tilde{u})), $$
that is, $ (Y,Z)|(\mathbf{X},\tilde{\mathbf{X}})_{(j)}=_{d}(Y,Z)|(\mathbf{X},\tilde{\mathbf{X}}) $. By definition,  $(\mathbf{X},\tilde{\mathbf{X}})_{(j)}=_{d}(\mathbf{X},\tilde{\mathbf{X}})$, thus we have
$$ (Y,Z,(\mathbf{X},\tilde{\mathbf{X}})_{(j)})=_{d}(Y,Z,(\mathbf{X},\tilde{\mathbf{X}})), $$
namely, $ (Y,Z,X_j)=_{d}(Y,Z,\tilde{X_j}) $. Therefore, we have $\omega(Y,X_j,Z)=\omega(Y,\tilde{X_j},Z)$, implying that $\psi_j=0$. This completes the first part of the proof. Meanwhile, repeating the above steps, we have $ (Y,Z,(\mathbf{X},\tilde{\mathbf{X}})_{(S)})=_{d}(Y,Z,(\mathbf{X},\tilde{\mathbf{X}})) $, where $ S \subset \mathcal{M}^c $. Let $\hat{\psi}=(\hat{\psi}_1,\ldots,\hat{\psi}_p)^T$. We have
$$ \hat{\psi}=f( \mathcal{Y},\mathcal{Z},(\mathcal{X},\tilde{\mathcal{X}}) ), $$
where $f(\cdot): R^{N \times (2p+1)} \rightarrow R^{p}$. 
Let $\epsilon=(\epsilon_1,\ldots,\epsilon_p)$ be a sequence of independent random variables such that for $j \in \mathcal{M}$, $ P(\epsilon_j=1)=1 $, and for $j \in \mathcal{M}^c$, $ P(\epsilon_j=1)=P(\epsilon_j=-1)=1/2 $. For any $\epsilon$, let $S=\{ j:\epsilon_j=-1 \} \subset \mathcal{M}^c$, then we have
$$ (\hat{\psi}_1,\ldots,\hat{\psi}_p)^T=f( \mathcal{Y},\mathcal{Z},(\mathcal{X},\tilde{\mathcal{X}}) )=_{d}f( \mathcal{Y},\mathcal{Z},(\mathcal{X},\tilde{\mathcal{X}})_{(S)} )=(\epsilon_1 \hat{\psi}_1,\ldots,\epsilon_p \hat{\psi}_p)^T ,$$
where the last equality holds since $ f_j(\mathcal{Y},\mathcal{Z},(\mathcal{X}_j,\tilde{\mathcal{X}}_j))=-f_j(\mathcal{Y},\mathcal{Z},(\tilde{\mathcal{X}}_j,\mathcal{X}_j)) $ based on the definition of $\hat{\psi}_j$. Thus we have
$$ (\hat{\psi}_1,...,\hat{\psi}_p)^T=_{d}(\epsilon_1 \hat{\psi}_1,\ldots,\epsilon_p \hat{\psi}_p)^T. $$
This completes the second part of the proof\citep{r9}.
\end{proof}

In the following proof, we introduce some lemmas.
\begin{lemma}\label{b5}
Suppose that for $\theta_h$, $h=1,\ldots,s$, there exists a constant $a>0$ such that $|\theta_h|<a$. Let $\tilde{\theta}_h$ be an estimator of $\theta_h$ and suppose that for any $\epsilon>0$, there exists a constant $c>0$ such that, for any $h\in \left\{ 1,\ldots, s \right\}$,
$$ P( |\tilde{\theta}_h-\theta_h|\geq \epsilon )\leq c(1-\epsilon^2/c)^m.$$
Then, there exists a constant $c'>0$ such that
\begin{align*}
 P( ||\tilde{\theta}_h|-|\theta_h||\geq \epsilon ) & \leq c'(1-\epsilon^2/c')^m, \\
 P( |(\tilde{\theta}_{h_1} + \tilde{\theta}_{h_2})-(\theta_{h_1} + \theta_{h_2})|\geq \epsilon ) & \leq c'(1-\epsilon^2/c')^m ,\\
 P( |k\tilde{\theta}_h-k\theta_h|\geq \epsilon ) &\leq c'(1-\epsilon^2/c')^m ,\\
 P( |\tilde{\theta}_{h_1}\tilde{\theta}_{h_2}-\theta_{h_1} \theta_{h_2}|\geq \epsilon ) & \leq c'(1-\epsilon^2/c')^m.     
\end{align*}
If there exists a constant $b>0$ such that $|\theta_{h_2}|>b$, we have
$$ P( |\tilde{\theta}_{h_1}/\tilde{\theta}_{h_2}-\theta_{h_1} /\theta_{h_2}|\geq \epsilon )\leq c'(1-\epsilon^2/c')^m ,$$
and if $\theta_h>0$, we have
$$ P( |\sqrt{\tilde{\theta}_h}-\sqrt{\theta_h}|\geq \epsilon )\leq c'(1-\epsilon^2/c')^m .$$
\end{lemma}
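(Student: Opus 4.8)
The plan is to treat each of the six claimed inequalities separately, in every case reducing the deviation of the transformed estimator to the input deviations $|\tilde{\theta}_h-\theta_h|$, to which the hypothesis applies directly. Throughout I would rely on two elementary devices. First, a union bound after splitting the target event by the triangle inequality: this costs only a rescaling $\epsilon\mapsto\epsilon/\text{const}$ and a constant multiple out front, both absorbable into a new constant $c'$. Second, the observation that, since the $\theta_h$ are bounded and in the intended application all estimated quantities (moments and correlations) are bounded, it suffices to prove each bound for $\epsilon$ in a fixed range $0<\epsilon\le M$. This second device is what lets me replace any ``linear'' tail $c(1-\epsilon/c)^m$, or any term that is constant in $\epsilon$, by the canonical ``quadratic'' tail $c'(1-\epsilon^2/c')^m$: on $0<\epsilon\le M$ one has $\epsilon\ge\epsilon^2/M$, so enlarging $c'$ converts the former into the latter, and similarly a fixed factor $(1-a^2/c)^m$ is dominated by $(1-\epsilon^2/c')^m$ once $c'\ge M^2c/a^2$. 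I would state this reduction once and reuse it.

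The absolute-value, sum, and scalar-multiple bounds are immediate. For the absolute value, the reverse triangle inequality $\big||\tilde{\theta}_h|-|\theta_h|\big|\le|\tilde{\theta}_h-\theta_h|$ gives the bound with $c'=c$. For the sum, $|(\tilde{\theta}_{h_1}+\tilde{\theta}_{h_2})-(\theta_{h_1}+\theta_{h_2})|\ge\epsilon$ forces one summand deviation to exceed $\epsilon/2$, and a union bound yields $2c(1-\epsilon^2/(4c))^m$, of the required form with $c'=4c$. For the scalar multiple, $|k\tilde{\theta}_h-k\theta_h|\ge\epsilon$ is exactly $|\tilde{\theta}_h-\theta_h|\ge\epsilon/|k|$, giving the bound with $c'=\max(c,|k|^2c)$.

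The product, quotient, and square root are where boundedness is used in earnest, each handled by conditioning on a ``good'' event. For the product I would use the decomposition $\tilde{\theta}_{h_1}\tilde{\theta}_{h_2}-\theta_{h_1}\theta_{h_2}=\tilde{\theta}_{h_2}(\tilde{\theta}_{h_1}-\theta_{h_1})+\theta_{h_1}(\tilde{\theta}_{h_2}-\theta_{h_2})$. On the good event $\{|\tilde{\theta}_{h_2}|\le 2a\}$ this is bounded by $2a|\tilde{\theta}_{h_1}-\theta_{h_1}|+a|\tilde{\theta}_{h_2}-\theta_{h_2}|$, a bounded-coefficient linear combination controlled by the sum and scalar-multiple bounds just proved; the bad event $\{|\tilde{\theta}_{h_2}|>2a\}$ forces $|\tilde{\theta}_{h_2}-\theta_{h_2}|>a$ (since $|\theta_{h_2}|<a$), so its probability is at most $c(1-a^2/c)^m$, a constant-in-$\epsilon$ term absorbed as above. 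For the quotient I would write the difference as $\big(\theta_{h_2}(\tilde{\theta}_{h_1}-\theta_{h_1})-\theta_{h_1}(\tilde{\theta}_{h_2}-\theta_{h_2})\big)/(\tilde{\theta}_{h_2}\theta_{h_2})$ and split on $\{|\tilde{\theta}_{h_2}|>b/2\}$; there the denominator exceeds $b^2/2$, reducing the event to a deviation of the bounded-coefficient linear numerator, while $\{|\tilde{\theta}_{h_2}|\le b/2\}$ forces $|\tilde{\theta}_{h_2}-\theta_{h_2}|>b/2$ and is handled identically. The square root is analogous: on $\{\tilde{\theta}_h\ge\theta_h/2\}$ the identity $\sqrt{\tilde{\theta}_h}-\sqrt{\theta_h}=(\tilde{\theta}_h-\theta_h)/(\sqrt{\tilde{\theta}_h}+\sqrt{\theta_h})$ yields the Lipschitz bound $|\sqrt{\tilde{\theta}_h}-\sqrt{\theta_h}|\le|\tilde{\theta}_h-\theta_h|/\sqrt{\theta_h/2}$ and hence a clean quadratic tail, with $\{\tilde{\theta}_h<\theta_h/2\}$ again a concentration term.

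The main obstacle is bookkeeping rather than conceptual: I must ensure that every bad-event probability and every non-quadratic term — which arise naturally as constants in $\epsilon$ or with a linear exponent — is genuinely dominated by the canonical form $c'(1-\epsilon^2/c')^m$. This is precisely where the reduction to a bounded range of $\epsilon$ is essential, since without it a fixed factor such as $c(1-a^2/c)^m$ could not be absorbed for arbitrarily small $\epsilon$. Establishing that reduction cleanly at the outset, and verifying that the good-event/bad-event split keeps the dominant term in quadratic form, are the only delicate points; the remaining manipulations are the routine union-bound-and-rescale steps.
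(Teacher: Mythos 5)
Your proposal is correct, but it should be noted that the paper does not actually prove this lemma at all: its entire proof is the single line ``The proof follows from Lemma 6 of \citet{r2},'' so what you have written is a self-contained reconstruction of the argument the paper outsources. Your route --- the reverse triangle inequality for the absolute value, triangle-plus-union bounds for sums and scalar multiples, and good-event/bad-event splits for the nonlinear operations (the decomposition $\tilde{\theta}_{h_2}(\tilde{\theta}_{h_1}-\theta_{h_1})+\theta_{h_1}(\tilde{\theta}_{h_2}-\theta_{h_2})$ for the product, the common-denominator form for the quotient, and the conjugate identity for the square root) --- is exactly the standard proof of the cited lemma, so in substance the approaches coincide; what your version buys is that the present paper becomes self-contained. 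The one point needing care is your reduction to $0<\epsilon\le M$, which you justify by appealing to boundedness of the estimated quantities ``in the intended application'': the lemma assumes only $|\theta_h|<a$, not boundedness of $\tilde{\theta}_h$, so that appeal is not available from the hypotheses. It is also not needed: since the hypothesis is quantified as ``for any $\epsilon>0$ there exists $c>0$,'' the constants may depend on $\epsilon$, and each constant-in-$\epsilon$ bad-event term such as $c(1-a^2/c)^m$ is absorbed into $c'(1-\epsilon^2/c')^m$ by taking $c'\ge\max\{c,\epsilon^2 c/a^2\}$ at each fixed $\epsilon$; your bounded-range device then simply upgrades this to a constant uniform over $\epsilon\in(0,M]$, which is in fact the version needed downstream where the lemma is applied with $\epsilon=cN^{-\tau}\to 0$. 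With that reading, your good events $\{|\tilde{\theta}_{h_2}|\le 2a\}$, $\{|\tilde{\theta}_{h_2}|>b/2\}$, $\{\tilde{\theta}_h\ge\theta_h/2\}$ and the resulting bounds are all sound.
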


\begin{proof}[Proof of Lemma \ref{b5}]
The proof follows from Lemma 6 of \citet{r2}.
\end{proof}

\begin{lemma}\label{b6}
Let $X_1,\ldots,X_n$ be a set of independent, centered random variables with $||X_i||_{\Psi_{\alpha}}\leq M$ for some $\alpha \in (0,1]$, where
$$ ||X||_{\Psi_{\alpha}}=\inf\left\{ t>0: E\exp(|X|^\alpha/t^\alpha)\leq 2 \right\}, $$
for any $a \in R^n$ and $t \geq 0$, we have
$$ P(|\sum^{n}_{i=1}a_iX_i|\geq t)\leq 2\exp(-\frac{1}{C_\alpha}\min( \frac{t^2}{M^2|a|^2} ,\frac{t^\alpha}{M^\alpha\max_{i}|a_i|^\alpha} )) , $$
where $C_\alpha$ is a positive constant that only depends on $\alpha$.
\end{lemma}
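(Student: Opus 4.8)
The plan is to read Lemma \ref{b6} as a Bernstein-type tail bound for a weighted sum $S=\sum_{i=1}^n a_iX_i$ of independent, centered, $\alpha$-sub-exponential random variables. The two terms inside the minimum are the two deviation regimes: for moderate $t$ the bound is Gaussian with variance proxy $M^2|a|^2$, while for large $t$ it is governed by the single heaviest coordinate through $M^\alpha\max_i|a_i|^\alpha$. When $\alpha=1$ this is exactly the classical Bernstein inequality, and the standard moment-generating-function (Chernoff) proof in \citet{r4} applies after noting that $\|X_i\|_{\Psi_1}\le M$ makes $E\exp(\lambda a_iX_i)$ finite and quadratically controlled for $|\lambda|$ below a threshold of order $(M\max_i|a_i|)^{-1}$. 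The genuine work is for $\alpha\in(0,1)$, where the tails are heavier than exponential, so $E\exp(\lambda a_iX_i)=\infty$ for every $\lambda\neq 0$ and a direct Chernoff bound is unavailable.

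For $\alpha\in(0,1)$ I would use the moment method. First, convert the Orlicz control into polynomial moment growth: from $E\exp(|X_i|^\alpha/M^\alpha)\le 2$ and the power series of the exponential one gets $E|X_i|^{\alpha k}\le 2\,k!\,M^{\alpha k}$, hence, via Stirling, $\|X_i\|_p=(E|X_i|^p)^{1/p}\le C_\alpha M\,p^{1/\alpha}$ for all $p\ge 1$, with $C_\alpha$ depending only on $\alpha$; in particular $EX_i^2\le C_\alpha^2M^2$. Second, bound the $p$-th moment of the sum by the sharp moment inequality for sums of independent $\psi_\alpha$ variables,
\[
\Big\|\sum_{i=1}^n a_iX_i\Big\|_p\le C_\alpha\Big(\sqrt{p}\,M\,|a|+p^{1/\alpha}\,M\,\max_i|a_i|\Big),\qquad p\ge 1 .
\]
Finally, apply Markov in the form $P(|S|\ge t)\le(\|S\|_p/t)^p$ and optimize over $p$: choosing $p\asymp t^2/(M^2|a|^2)$ forces the first summand below $t$ and yields the Gaussian exponent, while $p\asymp (t/(M\max_i|a_i|))^{\alpha}$ handles the second summand and yields the $\alpha$-Weibull exponent; taking the smaller of the two reproduces the stated $\min(\cdot,\cdot)$ with a constant depending only on $\alpha$.

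The main obstacle is establishing the displayed moment inequality with the exponent $p^{1/\alpha}$ and, crucially, with no extra polynomial factor of $p$ in front of it: a careless application of the generic Rosenthal inequality carries a spurious leading factor $p$, which after the $p$-optimization degrades the tail exponent from $(t/(M\max_i|a_i|))^{\alpha}$ to $(t/(M\max_i|a_i|))^{\alpha/(1+\alpha)}$ and misses the claim. To obtain the sharp bound I would either invoke Latała-type moment estimates for sums of independent symmetric variables (after a symmetrization step), isolating the $\ell^2$ and $\ell^\infty$ contributions as the two terms above, or proceed by a truncation at level $\tau\asymp M\,p^{1/\alpha}$: the truncated parts are bounded and contribute the $\sqrt{p}\,M|a|$ term through Bernstein, while the overflow parts are controlled using $P(|X_i|>\tau)\le 2\exp(-(\tau/M)^\alpha)$ to produce the $p^{1/\alpha}M\max_i|a_i|$ term. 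Since this is a known concentration result, I would alternatively quote it directly from \citet{r3} for $\alpha\in(0,1]$ (and from \citet{r4} when $\alpha=1$), presenting the moment-method argument above as the self-contained justification.
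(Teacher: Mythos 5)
Your proposal is correct, but it does substantially more work than the paper, whose entire proof of Lemma \ref{b6} is the single line that the result is adapted from Corollary 1.4 of \citet{r3} (G\"otze, Sambale and Sinulis, 2021). That citation is exactly the fallback you offer in your closing sentence, so at that level you and the authors coincide; the genuine difference is that the bulk of your argument is a self-contained derivation that the paper never attempts. Your route --- converting the Orlicz bound into the moment growth $\|X_i\|_p\leq C_\alpha M p^{1/\alpha}$, invoking the sharp two-term moment inequality $\bigl\|\sum_i a_iX_i\bigr\|_p\leq C_\alpha\bigl(\sqrt{p}\,M|a|+p^{1/\alpha}M\max_i|a_i|\bigr)$, and finishing with Markov plus optimization over $p$ --- is sound, and you correctly isolate the two delicate points: the Chernoff/MGF argument is unavailable for $\alpha<1$ because the moment generating function is infinite, and a generic Rosenthal bound carries a spurious factor of $p$ that would ruin the Weibull exponent, so one needs Lata{\l}a-type estimates or a truncation at level $Mp^{1/\alpha}$. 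What the paper's approach buys is brevity (reasonable, since the inequality is a known black-box result); what yours buys is self-containedness and an explanation of where the Gaussian and $\alpha$-Weibull regimes in the $\min$ actually come from. The only point worth making explicit in your write-up is that the Markov-plus-optimization step presumes $p\geq 1$, i.e.\ $t$ at least of order $M|a|\vee M\max_i|a_i|$; for smaller $t$ the claimed bound is trivial once $C_\alpha$ is enlarged so that the right-hand side exceeds $1$.
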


\begin{proof}[Proof of Lemma \ref{b6}]
The proof is adapted from Corollary 1.4. of \citet{r3}.
\end{proof}

\begin{lemma}\label{b7}
Suppose that there exists constants $(\kappa_0, D_0)$ such that, for any $0 \leq \kappa \leq \kappa_0$, $E\left\{ 
\exp(\kappa|X|)\right\}<D_0 $. Then, for any $0<\lambda<\kappa_0/(3e)$, we have 
$$ Ee^{\lambda X}\leq D_0\frac{\kappa_0}{\kappa_0-3e\lambda}. $$
\end{lemma}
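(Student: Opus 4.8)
The plan is to control the moment generating function of $X$ by bounding the moments of $|X|$ through the hypothesis and then summing a geometric series. First I would pass from $Ee^{\lambda X}$ to moments of $|X|$: expanding the exponential and using $|EX^{k}|\le E|X|^{k}$ gives $Ee^{\lambda X}=\sum_{k\ge 0}\lambda^{k}EX^{k}/k!\le\sum_{k\ge 0}\lambda^{k}E|X|^{k}/k!$, where the interchange of expectation and summation is justified by Tonelli's theorem applied to the nonnegative expansion of $Ee^{\lambda|X|}$; this quantity is finite because $\lambda<\kappa_{0}/(3e)<\kappa_{0}$ lies in the admissible range, so the hypothesis gives $Ee^{\lambda|X|}<D_{0}$. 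It therefore suffices to bound $Ee^{\lambda|X|}$ term by term.

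Next I would derive a clean moment bound. For every $k\ge 1$ the map $t\mapsto t^{k}e^{-\kappa_{0}t}$ on $[0,\infty)$ is maximized at $t=k/\kappa_{0}$, which yields the pointwise inequality $t^{k}\le (k/(e\kappa_{0}))^{k}e^{\kappa_{0}t}$. Substituting $t=|X|$ and invoking $Ee^{\kappa_{0}|X|}<D_{0}$ then gives $E|X|^{k}\le D_{0}\,(k/(e\kappa_{0}))^{k}$. Combining this with the elementary bound $k^{k}\le e^{k}k!$, which is immediate from $e^{k}=\sum_{j}k^{j}/j!\ge k^{k}/k!$, produces the key estimate $\lambda^{k}E|X|^{k}/k!\le D_{0}(\lambda/\kappa_{0})^{k}$ for every $k\ge 0$ (the case $k=0$ reducing to $1\le D_{0}$).

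Summing this geometric series for any $\lambda<\kappa_{0}$ gives $Ee^{\lambda X}\le Ee^{\lambda|X|}\le D_{0}\sum_{k\ge 0}(\lambda/\kappa_{0})^{k}=D_{0}\kappa_{0}/(\kappa_{0}-\lambda)$. To land on the stated form I would finally relax the constant: since $0<\lambda<\kappa_{0}/(3e)$ forces $\kappa_{0}-3e\lambda>0$ and, because $3e>1$, also $\kappa_{0}-\lambda\ge\kappa_{0}-3e\lambda$, we obtain $D_{0}\kappa_{0}/(\kappa_{0}-\lambda)\le D_{0}\kappa_{0}/(\kappa_{0}-3e\lambda)$, which is the claim.

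I do not expect a genuine obstacle, as the statement is rather mild: indeed there is a one-line alternative, namely $Ee^{\lambda X}\le Ee^{\lambda|X|}<D_{0}\le D_{0}\kappa_{0}/(\kappa_{0}-3e\lambda)$, obtained directly from $e^{\lambda X}\le e^{\lambda|X|}$ together with $\kappa_{0}/(\kappa_{0}-3e\lambda)\ge 1$. The only points needing care are the justification of the term-by-term expectation, handled by Tonelli on the $|X|$-series, and the bookkeeping that matches the self-contained bound $\kappa_{0}/(\kappa_{0}-\lambda)$ to the looser stated constant; the factor $3e$ is simply slack that leaves room for the cruder moment estimates one might prefer in a uniform presentation.
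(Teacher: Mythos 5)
Your proof is correct, and it reaches the bound by a genuinely different route than the paper --- in fact it proves slightly more. The paper first converts the hypothesis into a tail bound $P(|X|>t)<D_0e^{-\kappa_0 t}$ via Markov's inequality, integrates the tail (layer-cake formula) to get $E|X|^p\le pD_0\Gamma(p)/\kappa_0^p$, bounds $\Gamma(p)\le 3p^p$, and then invokes Stirling's $n!\ge(n/e)^n$ in the series expansion; the factors $3$ and $e$ accumulated along the way are precisely what produce the geometric ratio $3e\lambda/\kappa_0$ and hence the constant in the statement. You instead bound the moments directly through the pointwise inequality $t^k\le \bigl(k/(e\kappa_0)\bigr)^k e^{\kappa_0 t}$ and the elementary estimate $k^k\le e^k k!$, which yields the sharper ratio $\lambda/\kappa_0$ and the stronger conclusion $Ee^{\lambda X}\le D_0\kappa_0/(\kappa_0-\lambda)$ for every $0<\lambda<\kappa_0$; the stated inequality is then a pure relaxation of constants. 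Your closing observation is also right and worth emphasizing: since the hypothesis is uniform over $\kappa\in[0,\kappa_0]$ and $\lambda<\kappa_0/(3e)<\kappa_0$, one may take $\kappa=\lambda$ and obtain $Ee^{\lambda X}\le Ee^{\lambda|X|}<D_0\le D_0\kappa_0/(\kappa_0-3e\lambda)$ in one line, so the factor $3e$ in the lemma is slack generated by the paper's cruder moment estimates rather than anything intrinsic to the problem. What the paper's route buys is robustness: it passes through the tail decay of $|X|$ alone, so the argument would survive if the hypothesis were weakened to a tail condition; what your route buys is sharper constants and, via the one-liner, the recognition that the lemma as stated is essentially immediate from the assumption.
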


\begin{proof}[Proof of Lemma \ref{b7}]
For any $t>0$ and $0<\kappa \leq \kappa_0$, we have
$$ P(|X|>t)=P(e^{\kappa|X|}>e^{\kappa t})\leq e^{-\kappa t}E\left\{ 
\exp(\kappa|X|)\right\}<D_0e^{-\kappa t}. $$
Thus, for $p\geq 1$, we have
\begin{align}
E|X|^{p}&=\int^{+\infty}_{0}P(|X|^{p}>u){\rm d}u=\int^{+\infty}_{0}P(|X|>t)pt^{p-1}{\rm d}t \nonumber \\
&\leq \int^{+\infty}_{0}D_0e^{-\kappa t}pt^{p-1}{\rm d}t=\int^{+\infty}_{0}D_0e^{-x}p(\frac{x}{\kappa})^{p-1}\frac{1}{\kappa}{\rm d}x \nonumber \\
&=\frac{pD_0}{\kappa^p}\Gamma(p)\leq \frac{pD_0}{\kappa^p}3p^p, \label{10-1}
\end{align}
where the last inequality holds since $\Gamma(x)\leq 3x^x$ when $x\geq 1/2$ \citep{r4}. Since \eqref{10-1} holds for any $0 < \kappa \leq \kappa_0$, we let $\kappa=\kappa_0$. When $\lambda>0$, we have
\begin{align}
Ee^{\lambda X}&=E(1+\sum_{n=1}^{\infty}\frac{(\lambda X)^n}{n!})=1+\sum_{n=1}^{\infty}\frac{\lambda^nEX^n}{n!} \nonumber \\
&\leq 1+\sum_{n=1}^{\infty}\frac{\lambda^n D_03^nn^n/\kappa_0^n}{(n/e)^n} \leq D_0\sum_{n=0}^{\infty}(\frac{3e\lambda}{\kappa_0})^n \nonumber \\
&=D_0\frac{1}{1-3e\lambda/\kappa_0}=D_0\frac{\kappa_0}{\kappa_0-3e\lambda}, \label{10-2}
\end{align}
where the first inequality is due to \eqref{10-1} and Stirling's formula ($n! \geq (n/e)^n$). This completes the proof.
\end{proof}

\begin{proof}[Proof of Proposition \ref{pro1}]
Note that 
\begin{align*}
\mathrm{var}(\bar{\theta}_{j,s})=\frac{1}{K}\mathrm{var}(\hat{\theta}^k_{j,s})=\frac{1}{N}\mathrm{var}(\hat{\theta}_{j,s}).
\end{align*}
By the condition \ref{a1}, $\mathrm{var}(\hat{\theta}_{j,s})$ is bounded uniformly. This completes the proof.
\end{proof}

\begin{proof}[Proof of Theorem \ref{b1}]
Since the following proof holds for any $j \in \{ 1,\ldots,p \}$ and $s \in \{1,\ldots,9\}$, we omit the subscript $(j,s)$ in some places for simplifying the notation. Let $\hat{\theta}$ denote a kernel of order $m$ of $\theta$. Then for a sample of size $n$ (denoted by $W$), a U-statistic of $\theta$ can be expressed as
$$ U_\theta=\binom{n}{m}^{-1}\sum_{i_1,\ldots, i_m}\hat{\theta}(W_{i_1},\ldots, W_{i_m}). $$
Let $r=\lfloor n/m \rfloor$ and $V_\theta=V_\theta(W_{i_1},\ldots, W_{i_n})=\frac{1}{r}\sum^r_{u=1}\hat{\theta}(W_{i_{(u-1)m+1}},\ldots, W_{i_{um}})$, then $U_\theta$ can be expressed as
$$ U_\theta=\frac{1}{n!}\sum_{i_1,\ldots,i_n}V_{\theta}(W_{i_1},\ldots,W_{i_n}),$$
where the summation is taken over different permutations of $\left\{1,\ldots, n\right\}$.

Now consider the distributed case. Let $U^k_\theta$ denote $U_\theta$ calculated based on $k$th subdata set. Define $\bar{U}_\theta=\frac{1}{K}\sum^K_{k=1}U^k_\theta$. For any $\epsilon>0$, $\nu>0$, we have
\begin{align}
 P(\bar{U}_\theta-\theta \geq \epsilon)&= P(e^{\nu(\bar{U}_\theta-\theta)} \geq e^{\nu\epsilon}) \nonumber \\
 &\leq e^{-\nu\epsilon}e^{-\nu\theta}Ee^{\nu \bar{U}_{\theta}}.  \label{10}
\end{align}  
According to the properties of convex function and Jensen's inequality, we have
\begin{align}
Ee^{\nu \bar{U}_{\theta}}&=Ee^{ \nu\{ \frac{1}{n!}\sum_{i_1,\ldots,i_n}\frac{1}{K}\sum^K_{k=1}V_{\theta}(W^k_{i_1},\ldots,W^k_{i_n}) \} } \nonumber \\
&\leq \frac{1}{n!}\sum_{i_1,\ldots,i_n}Ee^{ \frac{\nu}{K}\sum^K_{k=1}V_{\theta}(W^k_{i_1},\ldots,W^k_{i_n})  } \nonumber \\
&=Ee^{ \frac{\nu}{Kr}\sum^K_{k=1}\sum^r_{u=1}\hat{\theta}(W^k_{i_{(u-1)m+1}},\ldots, W^k_{i_{um}}) } \nonumber \\
&=(Ee^{ \frac{\nu}{Kr}\hat{\theta} })^{Kr}=(Ee^{ \kappa\hat{\theta} })^{Kr}, 
\end{align}
where $\kappa=\frac{\nu}{Kr}$. Thus,
\begin{equation}
P(\bar{U}_\theta-\theta \geq \epsilon)\leq (e^{-\kappa\epsilon}e^{-\kappa\theta}Ee^{ \kappa\hat{\theta} })^{Kr}. \label{12}
\end{equation}
By Taylor expansion, we have
\begin{align}
e^{-\kappa\theta}Ee^{ \kappa\hat{\theta} }&=Ee^{\kappa(\hat{\theta}-\theta)}=E[1+\kappa(\hat{\theta}-\theta)+\frac{1}{2}\kappa^2(\hat{\theta}-\theta)^2e^{\kappa_1(\hat{\theta}-\theta)}] \nonumber \\
&=1+\frac{1}{2}\kappa^2E[(\hat{\theta}-\theta)^2e^{\kappa_1(\hat{\theta}-\theta)}] \nonumber \\
&\leq 1+\frac{1}{2}\kappa^2\sqrt{E(\hat{\theta}-\theta)^4Ee^{2\kappa_1(\hat{\theta}-\theta)}} \nonumber \\
&\leq 1+2\kappa^2\sqrt{E\hat{\theta}^4Ee^{2\kappa_1\hat{\theta}}e^{-2\kappa_1\theta} }, \label{13}
\end{align}
where $\kappa_1 \in (0,\kappa)$, the last inequality is due to the $C_r$'s inequality and the Jensen's inequality. Lemma \ref{b7} shows that when $\nu$ is small enough, $2\kappa_1<\kappa_0/(3e)$. Furthermore, there exists $D_1>0$ such that $e^{-\kappa\theta}Ee^{ \kappa\hat{\theta} }\leq 1+D_1\kappa^2$, and $D_1$ is only dependent on $(\kappa_0,D_0)$.

Similarly, when $\kappa\epsilon<1$, there exists $D_2>0$ such that
$$
\begin{aligned}
e^{-\kappa \epsilon } e^{-\kappa \theta } Ee^{ \kappa\hat{\theta} } & \leq\left(1+D_1 \kappa^2\right)\left(1-\epsilon \kappa+D_2 \epsilon^2 \kappa^2\right) \\
& =1-\epsilon \kappa+D_2 \kappa^2 \epsilon^2+D_1 \kappa^2-D_1 \kappa^3 \epsilon+D_1 D_2 \kappa^4 \epsilon^2 \\
& \leq 1-\epsilon \kappa+D_2 \kappa^2 \epsilon^2+D_1 \kappa^2+D_1 D_2 \kappa^4 \epsilon^2 \\
& =1-\epsilon \kappa+E_1,
\end{aligned}
$$
where $E_1=D_2 \kappa^2 \epsilon^2+D_1 \kappa^2+D_1 D_2 \kappa^4 \epsilon^2$. 
Let $c_0=\kappa/\epsilon$, we have
$$
\frac{E_1}{\kappa \epsilon}=D_2 c_0 \epsilon^2+D_1 c_0+D_1 D_2 c_0^3 \epsilon^4 .
$$
Hence, for any $\epsilon>0$, there exists a small enough $\nu$ that makes $\kappa$ small enough. Thus $(\kappa_1,c_0)$ can be taken to be small enough to ensure $E_1/(\kappa\epsilon)<1/2,\kappa\epsilon<1$. Then,
\begin{equation}
e^{-\kappa \epsilon } e^{-\kappa \theta } Ee^{ \kappa\hat{\theta} }\leq 1-\kappa\epsilon/2 .\label{14}
\end{equation}
Besides, from \eqref{12} and \eqref{14}, we have
$$ P(\bar{U}_\theta-\theta \geq \epsilon)\leq (1-c_0\epsilon^2/2)^{K[n/m]}. $$
Similarly, $P(\bar{U}_\theta-\theta \leq -\epsilon)\leq (1-c_0\epsilon^2/2)^{K[n/m]}$.
It follows that for any $\epsilon>0$, there exists $c_0>0$ such that
$$ P(|\bar{U}_\theta-\theta| \geq \epsilon)\leq 2(1-c_0\epsilon^2/2)^{K[n/m]}. $$
Based on Lemma \ref{b5}, we have that for any $\epsilon>0$, there exists $c_2>0$ such that
$$ P(|\tilde{\omega}_j-\omega_j | \geq \epsilon)\leq c_2(1-\epsilon^2/c_2)^{N}. $$
Then, it follows that
\begin{align*}
P(\max_{1\leq j \leq p}|\tilde{\omega}_j-\omega_j | \geq \epsilon)&\leq \sum^p_{j=1}P(|\tilde{\omega}_j-\omega_j | \geq \epsilon)\\
&\leq pc_2(1-\epsilon^2/c_2)^{N}.
\end{align*} 
Thus, the second inequality in the theorem is proved.

Next, we consider the case of simple average. Similar to the above discussion, we can get $P(|\hat{\rho}_{k,j}-\rho_j | \geq \epsilon)\leq c_1(1-\epsilon^2/c_1)^{n}$. Then, we have
\begin{align*}
P(\max_{1\leq j \leq p}|\bar{\omega}_j-\omega_j | \geq \epsilon)&\leq \sum^p_{j=1}P(|\bar{\omega}_j-\omega_j | \geq \epsilon)=\sum^p_{j=1}P(||\frac{1}{K}\sum^K_{k=1}\hat{\rho}_{k,j}|-|\rho_j| | \geq \epsilon)\\
&\leq \sum^p_{j=1}\sum^K_{k=1}P(|\hat{\rho}_{k,j}-\rho_j | \geq \epsilon)\leq pKc_1(1-\epsilon^2/c_1)^{n}.
\end{align*}
Thus, the first inequality in the theorem is proved. 

Now consider the case of the JDPS method. Note that
\begin{align}
P(|\bar{\omega}^D_j-\omega_j | \geq \epsilon)&=P( || \frac{1}{K}\sum^{K}_{k=1}(\hat{\rho}_{k,j}-\hat{\Delta}_{k,j})|-|\rho_j| |\geq \epsilon )\nonumber \\
&\leq \sum^{K}_{k=1}P( |\hat{\rho}_{k,j}-\rho_j-\hat{\Delta}_{k,j}| \geq \epsilon)\nonumber \\
&\leq \sum^{K}_{k=1}P( |\hat{\rho}_{k,j}-\rho_j| \geq \epsilon/2)+\sum^{K}_{k=1}P( |\hat{\Delta}_{k,j}| \geq \epsilon/2),\label{15}
\end{align}
so we just need to consider $P(|\hat{\Delta}_{k,j}| \geq \epsilon)$. Let $D_i=(X_i,Y_i,Z_i,X_iY_i,X_iZ_i,Y_iZ_i,X_i^2,Y_i^2,Z_i^2)^T$ and $D=ED_i$, and let $\bar{D}_k=\frac{1}{n}\sum_{i \in \mathcal{S}_k}D_i$. The subscript $-i$ denotes the $i$th sample removed. We know that
\begin{align*}
\hat{\Delta}_{k,j}&=\frac{n-1}{n}\sum^n_{i=1}(\hat{\rho}_{k,j,-i}-\hat{\rho}_{k,j})=\frac{n-1}{n}\sum^n_{i=1}(g(\bar{D}_{-i,k})-g(\bar{D}_k))\\
&=\frac{n-1}{n}\sum^n_{i=1}\dot{g}(\xi_{i,k})^T(\bar{D}_{-i,k}-\bar{D}_k)=\frac{1}{n}\sum^n_{i=1}\dot{g}(\xi_{i,k})^T(\bar{D}_k-D_{i,k})\\
&=\sum^9_{l=1}\frac{1}{n}\sum^n_{i=1}\dot{g}(\xi_{i,k})_{(l)}(\bar{D}_{k(l)}-D_{i,k(l)}),
\end{align*}
where the subscript $(l)$ denotes the $l$th component of vector, and $\xi_{i,k}$ is on the line joining $\bar{D}_{-i,k}$ and $\bar{D}_k$. And we have
\begin{align}
P( |\hat{\Delta}_{k,j}| \geq \epsilon)&\leq P(|\hat{\Delta}_{k,j}| \geq \epsilon,\max_{i \in \mathcal{S}_k}|| \xi_{i,k}-D ||_2 \leq ||D||_2)\nonumber \\
&+P( \max_{i \in \mathcal{S}_k}|| \xi_{i,k}-D ||_2 > ||D||_2 )=:\Delta_1+\Delta_2 .
\end{align}
In what follows, we deal with $\Delta_1$ and $\Delta_2$ respectively. First consider $\Delta_1$. From the properties of continuous functions, we know that $\dot{g}$ is bounded on the region $|| \xi_{i,k}-D ||_2 \leq ||D||_2$. We denote its upper bound as $M_1$. Then,
\begin{align}
\Delta_1 &\leq P( |\sum^9_{l=1}\frac{1}{n}\sum^n_{i=1}\dot{g}(\xi_{i,k})_{(l)}(\bar{D}_{k(l)}-D_{i,k(l)})|\geq \epsilon , \max_{l \in \{ 1,\ldots,9 \}}\max_{i \in \mathcal{S}_k} |\dot{g}(\xi_{i,k})_{(l)}|\leq M_1) \nonumber \\
&\leq \sum^9_{l=1}P( |\frac{1}{n}\sum^n_{i=1}\dot{g}(\xi_{i,k})_{(l)}(\bar{D}_{k(l)}-D_{i,k(l)})|\geq \epsilon/9 , \max_{l \in \{ 1,\ldots,9 \}}\max_{i \in \mathcal{S}_k} |\dot{g}(\xi_{i,k})_{(l)}|\leq M_1 ) \nonumber \\
&\leq \sum^9_{l=1} \{ P( |\frac{1}{n}\sum^n_{i=1}\dot{g}(\xi_{i,k})_{(l)}(\bar{D}_{k(l)}-D_{(l)})|\geq \epsilon/18 , \max_{l \in \{ 1,\ldots,9 \}}\max_{i \in \mathcal{S}_k} |\dot{g}(\xi_{i,k})_{(l)}|\leq M_1 ) \nonumber \\
&+ P( |\frac{1}{n}\sum^n_{i=1}\dot{g}(\xi_{i,k})_{(l)}(\bar{D}_{i,k(l)}-D_{(l)})|\geq \epsilon/18 , \max_{l \in \{ 1,\ldots,9 \}}\max_{i \in \mathcal{S}_k} |\dot{g}(\xi_{i,k})_{(l)}|\leq M_1 ) \} \nonumber \\
&=: \sum^9_{l=1}( \Delta_{l1}+\Delta_{l2} ) ,\label{17}
\end{align}
and $\Delta_{l1}\leq P( M_1| \bar{D}_{k(l)}-D_{(l)} |\geq \epsilon/18 )=P( |\frac{1}{n}\sum^n_{i=1}(D_{i,k(l)}-D_{(l)})|\geq \epsilon/(18M_1) )$. 
At the same time, according to the condition \ref{a1}, there exists a constant $M_0$ which depends only on $(\kappa_0,D_0)$ such that $|| D_{i,k(l)}-D_{(l)}||_{\Psi_{1}}\leq M_0$. Thus, by Lemma \ref{b6}, we have
$$ \Delta_{l1}\leq 2\exp(-\frac{1}{C_1}\min(  \frac{n\epsilon^2}{(18M_0M_1)^2},\frac{n\epsilon}{18M_0M_1} ) ). $$
Similarly, we can get that $\Delta_{l2}$ satisfies the same probability inequality. Therefore, combining with \eqref{17}, we can get
$$ \Delta_1\leq 36\exp(-\frac{1}{C_1}\min(  \frac{n\epsilon^2}{(18M_0M_1)^2},\frac{n\epsilon}{18M_0M_1} ) ) .$$
Afterwards, we consider $\Delta_2$. Note that 
\begin{align}
\Delta_2 &\leq \sum^n_{i=1}P( || \xi_{i,k}-D ||_2 > ||D||_2 )=\sum^n_{i=1}P( || \xi_{i,k}-D ||^2_2 > ||D||^2_2 ) \nonumber \\
&=\sum^n_{i=1}P( \sum^9_{l=1}( \xi_{i,k(l)}-D_{(l)} )^2 > ||D||^2_2 )\leq \sum^n_{i=1}\sum^9_{l=1}P( | \xi_{i,k(l)}-D_{(l)} | > ||D||_2/3 )\nonumber \\
&\leq \sum^n_{i=1}\sum^9_{l=1} \{ P( | \bar{D}_{k(l)}-D_{(l)} | > ||D||_2/3 ) + 
P( | \bar{D}_{-i,k(l)}-D_{(l)} | > ||D||_2/3 ) \} \nonumber \\
&\leq \sum^n_{i=1}\sum^9_{l=1} \Big\{  2\exp(-\frac{1}{C_1}\min( \frac{n||D||^2_2}{9M^2_0},\frac{n||D||_2}{3M_0} ))  \nonumber \\
& \quad\quad + 2\exp(-\frac{1}{C_1}\min( \frac{(n-1)||D||^2_2}{9M^2_0},\frac{(n-1)||D||_2}{3M_0} )) \Big\} \nonumber \\
&\leq 36n\exp( -\frac{1}{C_1}\min( \frac{(n-1)||D||^2_2}{9M^2_0},\frac{(n-1)||D||_2}{3M_0} ) ) ,
\end{align}
thus $P( |\hat{\Delta}_{k,j}| \geq \epsilon/2)\leq 36\exp(-c_4n\min(c_5^2\epsilon^2,c_5\epsilon) )+36n\exp( -c_6(n-1) )$, where $c_4=1/C_1, c_5=1/(36M_0M_1)$ and $c_6=C_1^{-1}\min( (||D||_2/(3M_0))^2, ||D||_2/(3M_0) )$.
Using \eqref{15}, we have
\begin{align*}
& P(\max_{1\leq j \leq p}|\bar{\omega}^D_j-\omega_j | \geq \epsilon) \leq \sum^p_{j=1}P(|\bar{\omega}^D_j-\omega_j | \geq \epsilon)\nonumber \\
&\leq pKc_3(1-\epsilon^2/c_3)^n +36pK\exp(-c_4n\min(c_5^2\epsilon^2,c_5\epsilon) )+36pN\exp( -c_6(n-1) ).
\end{align*}
This completes the proof.
\end{proof}

\begin{proof}[Proof of Theorem \ref{b2}]
Let $\gamma=cN^{-\tau}$. If $\mathcal{M}\not\subseteq \bar{\mathcal{M}}$, that is, there exists $j \in \mathcal{M}$ such that $\omega_j \geq 2cN^{-\tau}$ 
and $\bar{\omega}_j<cN^{-\tau}$, then $|\bar{\omega}_j-\omega_j|>cN^{-\tau}$. We have
\begin{align*}
P( \mathcal{M}\subseteq \bar{\mathcal{M}} ) &\geq  P( \max_{j \in \mathcal{M}}| 
 \bar{\omega}_j-\omega_j|\leq cN^{-\tau} )\geq 1- P( \max_{j \in \mathcal{M}}| 
 \bar{\omega}_j-\omega_j|> cN^{-\tau} ) \\
 &\geq 1-|\mathcal{M}|P( |\bar{\omega}_j-\omega_j|>cN^{-\tau} )\geq 1-|\mathcal{M}|Kd_1(1-N^{-2\tau}/d_1)^n.
\end{align*}
Similarly, the rest of Theorem \ref{b2} can be proved. Furthermore, we note that
$$(1-N^{-2\tau}/d_1)^n=(1-1/(N^{2\tau}d_1))^{ N^{2\tau}d_1(n/(N^{2\tau}d_1)) }.$$
Thus, there exists $N_1$ such that when $N>N_1$, we have
$$ |\mathcal{M}|Kd_1(1-N^{-2\tau}/d_1)^n < pKd_1(\frac{2}{e})^{n/(N^{2\tau}d_1)}.$$
Let $p=O(e^{N^{v_1}}), K=O(N^{v_2})$, and $v_1,v_2>0,v_1<1-2\tau-v_2$. When $N$ is big enough, we have
$$ |\mathcal{M}|Kd_1(1-N^{-2\tau}/d_1)^n <d_6N^{v_2}e^{N^{v_1}-d_7N^{1-2\tau-v_2}}=o(1). $$
Similarly, under the same conditions, we can prove that both $|\mathcal{M}|d_2(1-N^{-2\tau}/d_2)^N$ and $|\mathcal{M}|Kd_3(1-N^{-2\tau}/d_3)^n+|\mathcal{M}|\delta(cN^{-\tau})$ are $o(1)$. Therefore, in the case of $ p=O(e^{N^{v_1}}) $, the proposed methods still enjoy the sure screening property.
\end{proof}

\begin{proof}[Proof of Theorem \ref{b3}]
Let $B=\{ j:\omega_j \geq \frac{1}{2}cN^{-\tau} \}$. It follows that $|B|\leq 2c^{-1}N^{\tau}\sum^p_{j=1}\omega_j$, otherwise we have
$$ \sum^p_{j=1}\omega_j\geq [1+2c^{-1}N^{\tau}\sum^p_{j=1}\omega_j]\frac{1}{2}cN^{-\tau}>(2c^{-1}N^{\tau}\sum^p_{j=1}\omega_j)\frac{1}{2}cN^{-\tau}=\sum^p_{j=1}\omega_j.$$
Let $C=\{ \max_{1\leq j \leq p}|\hat{\omega}_j-\omega_j|\leq \frac{1}{2}cN^{-\tau}  \},D=\{ j:\hat{\omega}_j\geq cN^{-\tau} \} $, then on the event $C$, we have $|B|\geq |D|=|\hat{\mathcal{M}}|$. Otherwise, there exists $j_0$ 
such that $\hat{\omega}_{j_0}\geq cN^{-\tau}$ and $\omega_{j_0} < \frac{1}{2}cN^{-\tau}$. Thus, $|\hat{\omega}_{j_0}-\omega_{j_0}|> \frac{1}{2}cN^{-\tau} $. It follows that
\begin{align*}
P( |\hat{\mathcal{M}}|\leq 2c^{-1}N^{\tau}\sum^{p}_{j=1}\omega_j )&\geq P(C)\\
&=1-P( \max_{1\leq j \leq p}|\hat{\omega}_j-\omega_j|> \frac{1}{2}cN^{-\tau} )\\
&=1-o(1).
\end{align*}
This completes the proof.
\end{proof}

\begin{proof}[Proof of Theorem \ref{b4}] It follows that
\begin{align*}
P( \min_{j \in \mathcal{M}}\hat{\omega}_j \leq \max_{j \in \mathcal{M}^c}\hat{\omega}_j )&=P( \min_{j \in \mathcal{M}}\omega_j - \max_{j \in \mathcal{M}^c}\omega_j - ( \min_{j \in \mathcal{M}}\hat{\omega}_j - \max_{j \in \mathcal{M}^c}\hat{\omega}_j )\geq \varDelta ) \\
&=P( (\max_{j \in \mathcal{M}^c}\hat{\omega}_j  - \max_{j \in \mathcal{M}^c}\omega_j )- ( \min_{j \in \mathcal{M}}\hat{\omega}_j - \min_{j \in \mathcal{M}}\omega_j )\geq \varDelta ) \\
&\leq P( \max_{j \in \mathcal{M}^c}| \hat{\omega}_j-\omega_j|+\max_{j \in \mathcal{M}}| \hat{\omega}_j-\omega_j| \geq \varDelta)\\
&\leq P( \max_{1\leq j \leq p}| \hat{\omega}_j-\omega_j| \geq \varDelta/2 ).
\end{align*}   
Thus $P( \min_{j \in \mathcal{M}}\hat{\omega}_j>\max_{j \in \mathcal{M}^c}\hat{\omega}_j )\geq 1- P( \max_{1\leq j \leq p}| \hat{\omega}_j-\omega_j| \geq \varDelta/2 )=1-o(1) $. This completes the proof.
\end{proof}

\begin{proof}[Proof of Theorem \ref{knock-th}]
In Lemma \ref{knockoff-lemma}, we have proved that $\hat{\psi}$ has the same property as the non-distributed estimation in \citet{r7}. The remaining proof can be completed by using the same arguments in the proof of \citet{r7}.
\end{proof}

\begin{proof}[Proof of Theorem \ref{knock-th2}]
Now we restrict ourselves to the sets $\{ \mathcal{D}_{(k)}^{(2)} ,k=1,\ldots,K\}$. Let $\mathcal{M}_{2}=\mathcal{M} \cap \hat{\mathcal{M}}_{1}$, $\mathcal{M}_{2}^{c}=\mathcal{M}^{c} \cap \hat{\mathcal{M}}_{1}$, $ \omega_j=\omega(Y,X_j,Z)$, and $\omega_j^{'}=\omega(Y,\tilde{X_j},Z) $. Taking the ACPS method as an example, it is known from the proof of Theorem \ref{b1} that
$$ P(|\hat{\omega}_j-\omega_j|\geq c(Kn_2)^{-\tau})\leq c_2(1-c_7(Kn_2)^{-2\tau})^{Kn_2}, $$
$$ P(|\hat{\omega}_j^{'}-\omega_j^{'}|\geq c(Kn_2)^{-\tau})\leq c_2(1-c_7(Kn_2)^{-2\tau})^{Kn_2}. $$
Thus
\begin{align*}
P(|\hat{\psi}_j-\psi_j| \geq 2c(Kn_2)^{-\tau} )&=P(|\hat{\omega}_j-\hat{\omega}_j^{'}-(\omega_j-\omega_j^{'})| \geq 2c(Kn_2)^{-\tau})\\
&\leq P(|\hat{\omega}_j-\omega_j| \geq c(Kn_2)^{-\tau})+P(|\hat{\omega}_j^{'}-\omega_j^{'}| \geq c(Kn_2)^{-\tau})\\
&= O((1-c_7(Kn_2)^{-2\tau})^{Kn_2}).
\end{align*}
Based on Lemma \ref{knockoff-lemma}(a), we have $\psi_j=0$ for $j \in \mathcal{M}^c$. Thus,
\begin{align*}
& P(\max_{j \in \mathcal{M}_2^{c}}|\hat{\psi}_j| \leq 2c(Kn_2)^{-\tau} )\\
&=P(\max_{j \in \mathcal{M}_2^{c}}|\hat{\omega}_j-\omega_j+\omega_j^{'}-\hat{\omega}_j^{'}| \leq 2c(Kn_2)^{-\tau})\\
&\geq P( \{\max_{j \in \mathcal{M}_2^{c}}|\hat{\omega}_j-\omega_j| \leq c(Kn_2)^{-\tau} \} \cap \{ \max_{j \in \mathcal{M}_2^{c}}|\omega_j^{'}-\hat{\omega}_j^{'}| \leq c(Kn_2)^{-\tau}\} )\\
&=1-P(\{\cup_{j \in \mathcal{M}_2^{c}}\{|\hat{\omega}_j-\omega_j| > c(Kn_2)^{-\tau} \}\} \cup \{ \cup_{j \in \mathcal{M}_2^{c}}\{|\omega_j^{'}-\hat{\omega}_j^{'}| > c(Kn_2)^{-\tau}\}\})\\
&\geq 1-\sum_{j \in \mathcal{M}_2^{c}}P(|\hat{\omega}_j-\omega_j| > c(Kn_2)^{-\tau} )-\sum_{j \in \mathcal{M}_2^{c}}P(|\omega_j^{'}-\hat{\omega}_j^{'}| > c(Kn_2)^{-\tau})\\
&\geq 1-O(n_{2}(1-c_7(Kn_2)^{-2\tau})^{Kn_2}).
\end{align*}
According to the conditions of Theorem \ref{knock-th2}, $\min_{j \in \mathcal{M}}\psi_j \geq 4c(Kn_2)^{-\tau} $, we have
\begin{align*}
P(\min_{j \in \mathcal{M}_2}\hat{\psi}_j \geq 2c(Kn_2)^{-\tau})&=1-P(\cup_{j \in \mathcal{M}_2}\{\hat{\psi}_j < 2c(Kn_2)^{-\tau}\})\\
&\geq 1-\sum_{j \in \mathcal{M}_2}P(|\hat{\psi}_j-\psi_j| \geq 2c(Kn_2)^{-\tau})\\
&\geq 1-O(n_{2}(1-c_7(Kn_2)^{-2\tau})^{Kn_2}).
\end{align*}
Thus, $ \min_{j \in \mathcal{M}_2}\hat{\psi}_j > \max_{j \in \mathcal{M}_2^{c}}|\hat{\psi}_j| $ holds with probability $ 1-O(n_{2}(1-c_7(Kn_2)^{-2\tau})^{Kn_2}) $. The other two distributed methods have similar conclusions, that is, important features are ranked before unimportant features with a probability approaching one. Note that we start with $\min_{j \in \hat{\mathcal{M}}_1}|\hat{\psi}_j|$ when finding the threshold based on equation \eqref{talpha} and then move to larger values until we reach the $|\hat{\psi}_j|$ satisfying equation \eqref{talpha}. Restrict on the event $\{ \min_{j \in \mathcal{M}_2}\hat{\psi}_j > \max_{j \in \mathcal{M}_2^{c}}|\hat{\psi}_j| \}$, and let $t_0=\min_{j \in \mathcal{M}_2}\hat{\psi}_j$. Then if $1/|\mathcal{M}| \leq \alpha$, we have
$$  \frac{1+\#\{ j :\hat{\psi}_j \leq -t_0 \}}{\#\{ j :\hat{\psi}_j \geq t_0 \}}=\frac{1+0}{|\mathcal{M}|} \leq \alpha \ .$$
Hence, $T_{\alpha}\leq t_0$, $\mathcal{M} \subseteq \hat{\mathcal{M}}(T_{\alpha})$. If $1/|\mathcal{M}| > \alpha$, in order to make the equation \eqref{talpha} valid, there must be $T_{\alpha} < t_0$. Also, the final selected model may contain the true model, or there may not be a threshold value satisfying the equation \eqref{talpha}.
\end{proof}

\end{document}